\documentclass[11pt]{article}

\usepackage{hyperref}

\usepackage{booktabs} 

\usepackage[ruled]{algorithm2e} 

\SetAlFnt{\small} \SetAlCapFnt{\small} \SetAlCapNameFnt{\small}
\SetAlCapHSkip{0pt} \IncMargin{-\parindent}

\usepackage{fullpage}
\addtolength{\parskip}{1ex}
\setlength{\parindent}{0ex}
\newcommand{\E}{\mathbb{E}}
\usepackage{amsmath,amsfonts,amssymb,amsthm,enumitem}
\usepackage{color}
\newcommand{\AB}[1]{#1}
\newcommand{\notyet}[1]{}
\newcommand{\greedy}{{\tt Greedy}}
\newcommand{\batch}{{\tt Batch}}

\newtheorem{theorem}{Theorem}[section]

\newtheorem{lemma}[theorem]{Lemma}
\newtheorem{claim}[theorem]{Claim}
\newtheorem{corollary}[theorem]{Corollary}
\newtheorem{definition}[theorem]{Definition}

\begin{document}
\title{Kidney exchange and endless paths: On the optimal use of an altruistic donor
}
\author{Avrim Blum\thanks{Toyota Technological Institute at Chicago.  This work was supported in part by the National Science Foundation under grant CCF-1733556.}
\and Yishay Mansour\thanks{Blavatnik School of Computer Science,
Tel-Aviv University and Google Research. Supported in part by a
grant from the Israel Science Foundation, a grant from the United
States-Israel Binational Science Foundation (BSF), and the Israeli
Centers of Research Excellence (I-CORE) program (Center No. 4/11).}}
\maketitle

\begin{abstract}
We consider a \AB{well-studied} online random graph model for kidney exchange, where nodes representing patient-donor pairs arrive over time, and the probability of a directed edge is $p$.  We assume existence of a single altruistic donor, who serves as a start node in this graph for a directed path of donations.  The algorithmic problem is to select which donations to perform, and when, to minimize the amount of time that patients must wait 
before receiving a kidney.

\AB{We advance our understanding of this setting by (1) providing efficient (in fact, linear-time) algorithms with optimal $O(1/p)$ expected waiting time, (2) showing that some of these algorithms in fact provide guarantees to all patients of $O(1/p)$ waiting time {\em with high probability}, (3) simplifying previous analysis of this problem, and (4) extending results to the case of multiple altruistic donors. \notyet{and (5) extending results to a model with multiple types}}


\end{abstract}


\section{Introduction}
Altruistic donors have proven to be very powerful in practice in
kidney exchange, with a single donor enabling a long sequence of
matches.   In fact, such sequences have a name: a {\em Never Ending
Altruistic Donor (NEAD) chain} \cite{Rees09}.\footnote{See also
{\url{
http://www.nationalkidneycenter.org/treatment-options/transplant/a-chain-of-hope/nead-chain/}.}}
The idea is that the altruistic donor donates to a compatible
patient $A$ who has already joined the kidney exchange with her
willing but incompatible donor $B$, and in return $B$ agrees to
``pay it forward'' by serving as a donor to some existing or future
compatible patient $C$ who has entered the system with her willing
but incompatible donor $D$, and so on.  One reason this can be so
powerful is that unlike cyclic exchanges, these donations do not
have to be simultaneous.\footnote{In a cyclic exchange, where, say,
donor $D$ donates to patient $A$ and donor $B$ donates to patient
$C$, the operations need to be simultaneous since if one donor were
to back out after the other donor has donated, the un-transplanted
patient would have lost their donor.}

In this work, we consider a \AB{well-studied} online random graph model \AB{\cite{AshlagiJM13,AndersonAGK15,AndersonAGK17,AshlagiBJM2016,ashlagi2019matching}} in which
nodes (patient-donor pairs) arrive over time, and between any two
nodes $u,v$ there is a directed edge with probability $p$ (with
probability $p$, the donor for $u$ is compatible with the patient
for $v$).  
\AB{This model is of particular interest when $p$ is small, which corresponds to the important case of highly-sensitized patients.}
Our goal is to minimize the average waiting time of
patients until they get a kidney, as well as to provide per-patient high-probability bounds on their waiting time.

There are two natural extremes for using an altruistic donor.  One
is greedy longest-waiting-time-first: whenever a patient arrives who is compatible with the
altruistic donor, immediately donate the kidney, making the
patient's associated donor become the new altruistic donor.  There now may be multiple patients who this donor can donate to, and in this algorithm we always choose the patient who has been waiting the longest.  In a sense, this is the most ``fair'' and natural algorithm.
 A second natural extreme is to wait until enough
patient-donor pairs have arrived so that there is a Hamiltonian path
visiting all nodes in the graph, and then completely clear the queue
using this Hamiltonian path. 
\AB{Both algorithms yield an
$\Theta(\frac{1}{p}\log \frac{1}{p})$ expected waiting time per
patient.  However, this is not optimal, and  \cite{ashlagi2019matching} give an alternative, computationally-inefficient algorithm that achieves an optimal bound of $\Theta(\frac{1}{p})$.  In this work, we give two computationally-efficient (in fact, linear time) procedures that achieve the optimal $\Theta(\frac{1}{p})$
expected waiting time per patient, one of which also provides each patient a guarantee of waiting time at most $O(\frac{1}{p})$ with high probability.  We also give a somewhat simpler correctness argument.}  For the first algorithm, the idea is to
wait a bit before matching (unlike the greedy algorithm) in order to
have more options for routing a long path, but not to the extreme of
requiring the path cover all the existing nodes in the graph. 
\AB{It then uses Depth First
Search (DFS) on the graph to discover a long (though perhaps not the longest) path,} and therefore is
linear time to implement. We bound not only the expected waiting
time but also provide high probability bounds.  We then use our analysis of 
\AB{this algorithm to analyze a second}
 algorithm that does not require waiting.  This algorithm
runs the greedy longest-waiting-time-first algorithm when the number of patients waiting is small, but then switches over to one of the other algorithms (using a DFS-based method to select a path rather than always choosing the patient who has been waiting the longest) in order to more quickly reduce the queue.  We show the combined algorithm also 
enjoys an optimal
$\Theta(\frac{1}{p})$ expected waiting time for each patient.

In addition, we
also consider the case of multiple altruistic donors, and show that
if there are only $O(1/p)$ altruistic donors then the $\Omega(1/p)$ lower
bound still applies, while for $\Omega((1/p)\log(1/p))$ donors even a naive greedy has an $O(1)$ expected waiting time. (However, the $O(1)$ might be misleading, since conditioned on the patient not being matched
immediately, then the expected waiting time is $\Omega(1/p)$.)

\notyet{Finally, we consider a model where donor-patient pairs belong to different types, and there is a matrix describing which types can have edges to which other types, and prove results in this setting as well.}

We note that in real-life kidney exchanges, there is some chance
donors will back out (or become ill or otherwise be unable to
donate).  We ignore this effect here because it obscures the
distinctions between different algorithms.  For instance, if each
donor backs out with probability $q$, then no algorithm can possibly
hope to construct a chain of expected length more than $1/q$.

\subsection{Related work}

There has been substantial work analyzing kidney exchange in static
random graph models.  Questions studied include whether it is
possible to match most patients in the system, and to what extent
long chains in addition to short cycles are needed
\cite{AshlagiR14,AGRR12}.   Additionally, researchers have
considered questions such as motivating hospitals to join and fully
participate in the exchange under such models
\cite{AshlagiR11,ToulisP11} and how to match in the presence of
failures \cite{DPT13}.

\cite{Unver10} was the first to consider kidney exchange in {\em
dynamic} random graph models in which nodes arrive one at a time.
This work considers the dense-graph case, focusing on blood-type
incompatibility rather than highly-sensitized patients as we do
here.

The work of \cite{AshlagiJM13} considers dynamic kidney exchange in
the model we consider here, namely highly-sensitized patients (small
$p$) and assuming all pairs are blood-type compatible (so for every
pair $u,v$ there is an edge from $u$ to $v$ with probability $p$).
One of their
main results is that allowing for a
chain in addition to cycles of size $2$ or $3$ increases the \AB{total number of matches} linearly in the number of arriving nodes.
%
\cite{AkbarpourSG14} consider a dynamic model in which nodes both
arrive and depart over time, and examine different pairwise matching
algorithms in this model.

\AB{The question of queue size (expected waiting time for patients to receive a kidney) in the setting of highly sensitized patients is considered in \cite{AndersonAGK15,AndersonAGK17,AshlagiBJM2016,ashlagi2019matching}.}
\cite{AndersonAGK15} examine queue size for cycles
rather than paths.  Their conclusion is that for cycles, greedy
matching is optimal: they
show that the greedy algorithm has an average waiting time of
$O(1/p^2)$ for cycles of size $2$ and $O(1/p^{3/2})$ for cycles of
size $2$ or $3$, which is best possible. In a follow-up work,
\cite{AndersonAGK17}  show that for a chain (which starts with an
altruistic donor) the greedy algorithm, if it selects the longest
path in the graph, can guarantee an expected waiting time of
$O(1/p)$.
A clear caveat of such an approach is that computing the longest path is NP-complete.
In contrast, our efficient algorithms run in linear time.
%
%
\AB{Finally, \cite{AshlagiBJM2016,ashlagi2019matching} consider a dynamic model with both
easy-to-match and hard-to-match patients, and consider both cycles
and paths, analyzing expected waiting time.  A particularly relevant result shown in \cite{ashlagi2019matching} for our setting is}
that the greedy algorithm for paths has average waiting time
of $\Theta((1/p)\log(1/p))$ when all patients are hard to match.

There is of course a substantial body of work in general on analysis
of random graphs; see, e.g.,
\cite{bollobas2001random,frieze2015introduction}.
We use the ideas from \cite{Krivelevich16} to compute long paths in
random graphs using DFS algorithms.


\section{Model}

In our model, the basic unit is a pair consisting of a patient and a
willing but incompatible donor, which we model as a node in a
directed graph $G(V,E)$. The set of nodes of $V$ are these
patient-donor pairs except for one special node which represents the
single altruistic donor, which we call the start node.
%
A directed edge between two nodes indicates that the donor of the
first pair (node) is compatible with the patient of the second pair
(node).

The process of matching donors with patients reduces to finding a
directed path starting at the altruistic donor (start node). Again,
the interpretation is that each directed edge $u\rightarrow v$
represents a donation from the donor at $u$ to the patient at $v$. The
number of edges in the path represents the number of donations.
Essentially, our goal is to maximize the length of the path,
maximizing the number of patients that receive a donation.

We consider an {\em online (dynamic) model}, where there is a stream of
nodes (patient and donor pairs) that arrive one per time step and
our goal is to minimize the expected time a patient waits until she
is matched to a compatible donor.

%
%
We assume that we start at time $t=0$ with the start node $v_0$
(altruistic donor). At each integer time $t>0$ one node $v_t$ arrives. For
each existing node $v_\tau$, for $\tau < t$, we select with
probability $p$ an incoming edge and with probability $p$ an
outgoing edge, independently. I.e., with probability $p$ we have
$v_\tau \rightarrow v_t$, and also with probability $p$ we have $v_t
\rightarrow v_\tau$, where all the events are independent.

In time $t$, the algorithm may extend the directed path (which began
originally at the start node $v_0$, the altruistic donor) by one or
more edges if such an extension exists in the graph.  This is viewed
as servicing or matching those nodes on the directed path.

At each time $t$ we have a node $v_t^e$ which is the end of the
current path, and we call it {\em the end of the path}, and any
future extension has to start with it.
The nodes which are still not on the directed path are called {\em
waiting nodes}. We refer to the {\em queue size}, $q_t$, at time
$t$, as the number of waiting nodes at time $t$. The waiting time
$w_t$ of a node $v_t$ is the time between its arrival, $t$, and the
time it is added to the path $a_t$ (namely, the time until the
patient in the patient-donor pair is serviced). Formally,
$w_t=a_t-t$.

We assume that extending the path, by any extension, is done
instantaneously, and we ignore that time. We also assume that a node
exists until it is added to the path (i.e., serviced). Namely, patients do not depart until they receive a kidney.

Our main discussion is on {\em when} and {\em how} to extend
the path. Unlike some online models, an arriving node does not have
to be serviced immediately, even if it can be.

We assume that the process continues for $T$ time steps, but most of
our results will be {\em independent} of this parameter. For
nodes $v_t$ that are not serviced by time $T$ we assume that
$a_t=T$, just for simplicity of the presentation.

\subsection{Evaluation criteria}

Our main evaluation criteria is the expected waiting time of a node,
or alternatively, the expected queue size.
Note that the sum of the waiting times of nodes (i.e., patients)
over time is the same as the sum of the queue sizes over
time.\footnote{The sum of the waiting times is $\sum_t w_t = \sum_t
a_t -t =\sum_t \sum_\tau I[t\leq \tau\leq a_t]$, where $I[\cdot]$ is
the indicator function. The sum of the queue lengths is $\sum_\tau
q_\tau = \sum_\tau\sum_t I[t\leq \tau\leq a_t]$.}
%

\begin{definition}
The {\em average waiting time (queue size)} is
$\frac{1}{T}\sum_{t=1}^T q_t = \frac{1}{T}\sum_{t=1}^T w_t$. The
{\em expected waiting time (queue size)} is
$\E[\frac{1}{T}\sum_{t=1}^T q_t] = \E[\frac{1}{T}\sum_{t=1}^T w_t]$,
where the expectation is with respect to the random edges and any
randomization of the path selection algorithm.
\end{definition}

We will be also interested in deriving high probability guarantees
on the waiting time of a node.

\begin{definition}
We say that node $v_t$ has with probability $1-\delta$ a waiting
time of at most $\omega$ if $\Pr[w_t > \omega ]\leq \delta$.
\end{definition}

We are also interested in giving node specific guarantees, starting
at an arbitrary time, conditioned on the history. The goal is to
show that the system does not discriminate against any patient, and that
history has a limited effect on the waiting time.

\begin{definition}
The {\em expected additional waiting time} of node $v_t$ given a
history $h_\tau$ until time $\tau > t$ is $0$ if we have that
$a_t\leq \tau$, and otherwise it is $\E[a_t - \tau| h_\tau]$.
\end{definition}

\subsection{Multiple altruistic donors}

%


We also consider the case that there are multiple altruistic donors,
denoted by $R$. In this case each of the $R$ altruistic donors has a
separate directed path. (Clearly, the paths are node disjoint.) At
each time $t$, the algorithm decides which of the paths to extend
and how to extend them. (The algorithm may decide to extend multiple
paths at the same time.)

%
%
%

\section{Preliminaries: Random walks}

In many places in our analysis we will need to analyze sequences of random
variables which are generated through a random walk. At a high
level, the sequence of non-negative random variables will have the
property that if their value is above a certain threshold, we are
guaranteed that in expectation their value will decrease.
Intuitively, this implies that their expected value cannot be much
larger then the threshold. While this holds, under some assumptions
that do hold in our setting, it does require some analysis that we
perform in the Appendix.

Two remarks are in order. First, we believe that our derivation is most
likely implicitly known, but unfortunately we were not able to locate
any reference. For this reason we added the derivation in the
Appendix. Second, we remark that one cannot use the Azuma inequality in our
setting, since the decrease is ``unbounded'', while the Azuma
inequality requires that the maximum change is bounded.

%
%



We now do the precise formalization and derivation.
Let $Q_t$ be a sequence of non-negative random variables where
initially $Q_1=0$. At time $t+1$ either $Q_{t+1}=Q_t+1$ or,
$Q_{t+1}=Q_t-Z_t$ and $Z_t \in [0,Q_t]$. The main property that we
assume about the sequence is that when $Q_t\geq M+1$ then with
probability at least $\rho$ we have $Z_t\geq K$. In addition, $\rho
K =1+\beta $ where $\beta >0$, which means that for $|Q_t|\geq M+1$,
the expected change in $Q_t$, which is at most $1-\rho K$, is
negative. We call such a random walk a $(M,K,\rho,\beta)$ random
walk. (Actually, since $\rho K=1+\beta$ one parameter is redundant,
but it will be more convenient to have all four parameters.)

For a $(M,K,\rho,\beta)$ random walk, when $Q_t \geq M+1$ we have
that $E[Q_{t+1}|Q_t]\leq Q_t - \beta$. Intuitively we like to claim
that this implies that $E[Q_t]\leq M+O(K)$, however, this requires
some care.

In Appendix~\ref{app:random-walk} we show the following theorem.

\begin{theorem}
\label{thm:rand-walk}
 Let $Q_t$ be a $(M,K,\rho,\beta)$ random walk,
where $\beta \leq 3/5$. Then $E[Q_t]\leq M+K(1+\beta)/\beta$. In
addition, with probability $1-\delta$ we have $Q_t\leq
M+\frac{K(1+\beta)}{\beta}\ln \frac{2}{\delta}$.
\end{theorem}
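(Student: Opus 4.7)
The plan is to build an exponential Lyapunov function and obtain both bounds from a single uniform bound on its expectation. Concretely, I pick a parameter $\gamma=\Theta(\beta/K)$ (something like $\gamma = \beta/((1+\beta)K)$) and work with the potential $\phi(Q_t) = (1+\gamma)^{Q_t}$. The non-standard feature the paper calls out—that the downward jump $Z_t$ can be as large as $Q_t$, ruling out Azuma—is precisely what makes an exponential potential natural: arbitrarily large drops only help $\phi$ shrink, so the analysis only needs to control the $+1$ upward steps.

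The key computation is the one-step contraction. Because $p_+ \le 1-\rho$, $Z_t \ge K$ on the decrease event with probability $\ge\rho$, and $(1+\gamma)^{-Z_t}$ is monotone non-increasing in $Z_t$, I can bound
\[
E[\phi(Q_{t+1}) \mid Q_t] \;\le\; \phi(Q_t) \cdot \bigl[(1-\rho)(1+\gamma) + \rho(1+\gamma)^{-K}\bigr]
\]
whenever $Q_t \ge M+1$. Using $\rho K = 1+\beta$ and Taylor-expanding $(1+\gamma)^{-K}$, the linear-in-$\gamma$ term inside the bracket is $-(\beta+\rho)$ and the quadratic correction is $O(\rho K^2 \gamma^2) = O((1+\beta)K\gamma^2)$. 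Setting $\gamma \approx \beta/((1+\beta)K)$ makes the linear term dominate, yielding a bracket of the form $1-\eta$ with $\eta = \Theta(\gamma\beta)$. The hypothesis $\beta \le 3/5$ is what keeps $\gamma$ small enough (and $\gamma K$ bounded away from $1$) that the Taylor remainder does not swamp the negative linear term. In the low regime $Q_t \le M$, there is no contraction but the trivial bound $\phi(Q_{t+1}) \le (1+\gamma)^{M+1}$ holds since $Q$ increases by at most $1$ per step; combining the two cases yields the recursion $E[\phi(Q_{t+1})] \le (1-\eta)E[\phi(Q_t)] + (1+\gamma)^{M+1}$, which telescopes to $E[\phi(Q_t)] \le (1+\gamma)^{M+1}/\eta$ for all $t$.

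With this uniform bound, Markov's inequality on $\phi(Q_t)$ gives $\Pr[Q_t \ge M+y] \le 1/(\eta(1+\gamma)^{y-1})$, and inverting using $\ln(1+\gamma)\approx\gamma$ yields the high-probability bound $Q_t \le M + \frac{K(1+\beta)}{\beta}\ln(2/\delta)$ after tuning the constants in $\gamma$. The expectation bound $E[Q_t]\le M + K(1+\beta)/\beta$ then follows by integrating the tail, $E[(Q_t-M)_+] \le \int_0^\infty \Pr[Q_t - M > y]\,dy = O(1/\gamma) = O(K/\beta)$, and again a careful constant tuning pins down the exact form in the theorem. The main obstacle is keeping the constants tight: a naive exponential-supermartingale argument gives the right asymptotics but loses a $\log(K/\beta)$ factor, so the careful tuning of $\gamma$ and the sharp accounting enabled by the $\beta\le 3/5$ hypothesis are what bring the tail bound down to the form stated.
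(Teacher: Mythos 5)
Your drift computation is sound: the one-step bound $E[(1+\gamma)^{Q_{t+1}}\mid Q_t]\leq (1+\gamma)^{Q_t}\bigl[(1-\rho)(1+\gamma)+\rho(1+\gamma)^{-K}\bigr]$ for $Q_t\geq M+1$ is correct (unbounded downward jumps only help), and with $\gamma=\beta/((1+\beta)K)$ the bracket is at most $1-\eta$ with $\eta=\Theta(\gamma\beta)$, so the recursion $E[\phi(Q_{t+1})]\leq(1-\eta)E[\phi(Q_t)]+(1+\gamma)^{M+1}$ and the uniform bound $E[(1+\gamma)^{Q_t-M}]\leq (1+\gamma)/\eta$ all go through. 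The genuine gap is in the last step: this uniform exponential-moment bound carries the prefactor $1/\eta=\Theta\bigl(K(1+\beta)/\beta^2\bigr)$, and neither Markov nor tail-integration can shed it. Markov gives $\Pr[Q_t\geq M+y]\leq \frac{1+\gamma}{\eta}(1+\gamma)^{-y}$, so forcing this below $\delta$ costs $y=\Theta\bigl(\frac{K(1+\beta)}{\beta}(\ln\frac{1}{\delta}+\ln\frac{K}{\beta})\bigr)$; and your expectation step is simply wrong as written, since $\int_0^\infty\min\{1,Ce^{-\gamma y}\}\,dy=(1+\ln C)/\gamma$, not $O(1/\gamma)$, when $C=\Theta(1/\eta)\gg 1$. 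Tuning $\gamma$ cannot fix this: the best achievable contraction gap in this scheme is $\eta=O\bigl((\rho+\beta)^2/((1+\beta)K)\bigr)$, so $\ln(1/\eta)=\Omega(\ln(K/\beta))$ is intrinsic (in the \greedy\ application $K=\Theta(1/p)$ and $\rho=\Theta(p)$, so this is a real $\ln(1/p)$ loss, not a constant). Thus your argument proves $E[Q_t]\leq M+O\bigl(\frac{K(1+\beta)}{\beta}\ln\frac{K}{\beta}\bigr)$ and a tail with the same extra additive term --- which would in fact suffice for the paper's downstream uses --- but it does not prove the theorem in the stated form, and the closing claim that ``careful tuning of $\gamma$'' recovers it is an unsubstantiated hand-wave; the $\beta\leq 3/5$ hypothesis plays no role in removing the log (indeed $\gamma K=\beta/(1+\beta)<1$ automatically).

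The paper avoids this loss by a different route: it couples $Q_t$ with a dominating walk $Y_t$ reflected at $M$ (decrease exactly $K$ with probability $\rho$, else increase by $1$) and computes its stationary distribution explicitly, which is geometric, $s_{M+i}\approx c\,\alpha^{i}$ with $1-\alpha=\Theta\bigl(\beta/((1+\beta)K)\bigr)$ and, crucially, normalizing constant $c\approx 1-\alpha$. The tail is then $\sum_{i\geq A}s_{M+i}\approx\frac{c}{1-\alpha}\alpha^{A}\approx\alpha^{A}$ with an $O(1)$ prefactor, and the mean is $\frac{c\alpha^2}{(1-\alpha)^2}\leq\frac{K(1+\beta)}{\beta}$ directly, giving exactly the stated constants. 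If you want to salvage an excursion-based argument in your framework, the natural fix is optional stopping within a single excursion above $M$ (started at $M+1$, your $\phi$ is a supermartingale there, so the chance of reaching $M+y$ before returning is at most $(1+\gamma)^{-(y-1)}$ with no $1/\eta$ factor), but converting that into a bound at a fixed time $t$ requires a stationarity or size-biasing argument --- which is essentially what the paper's steady-state computation supplies.
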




\section{Lower bound for any algorithm}

We start by showing that for any algorithm the expected waiting time
has to be $\Omega(1/p)$.

\begin{theorem}
\label{thm:lower-bound-any}
For any algorithm the expected waiting time is at least
$0.5/p$.
\end{theorem}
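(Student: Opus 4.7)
Plan: The core observation is that at every time $t$ the matched nodes, together with $v_0$, form a single directed path inside the random graph $G_t$. Consequently, if $P_t$ denotes the number of nodes matched by time $t$, we have the deterministic bound $P_t\le L_t$, where $L_t$ is the length of the longest directed path starting at $v_0$ in $G_t$. The algorithm's online restriction can only make matching harder, so this purely graph-theoretic bound is more than enough for a lower bound on the queue.

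Next I would upper bound $\E[L_t]$ by a first-moment argument. The expected number of directed length-$k$ paths out of $v_0$ in $G_t$ equals $t(t-1)\cdots(t-k+1)\,p^k\le(tp)^k$, so Markov's inequality gives $\Pr[L_t\ge k]\le(tp)^k$, and hence
\[
\E[L_t] \;\le\; \sum_{k\ge 1}(tp)^k \;=\; \frac{tp}{1-tp}\qquad\text{whenever } tp<1.
\]
Since $q_t=t-P_t$, this immediately yields $\E[q_t]\ge t-\E[L_t]$; choosing a horizon $T$ with $Tp\le 1/2$ gives $\E[L_t]\le 2tp$ for all $t\le T$ and therefore $\E[q_t]\ge(1-2p)\,t$.

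Finally, using the identity $\sum_t w_t=\sum_t q_t$ noted in the preliminaries, I average over $t\le T$ to obtain $\frac{1}{T}\sum_{t=1}^T\E[w_t]\ge(1-2p)(T+1)/2=\Omega(1/p)$. To squeeze out the precise constant $1/(2p)$ claimed in the theorem, one tunes the window $T$ and the path-length threshold. An alternative, slicker route that directly yields a clean constant is to argue about the very first arrival $v_1$: with probability $1-p$ the edge $v_0\to v_1$ is absent, in which case $v_1$ cannot be matched until some future node provides an in-edge, and this happens only after an expected $1/p$ arrivals, giving $\E[w_1]\ge(1-p)/p$. The main obstacle is getting the constant $0.5$ clean — the $(tp)^k$ first-moment bound is loose as $tp$ approaches $1$, so the horizon must be tuned carefully — but the structural step $P_t\le L_t$ is essentially the entire argument.
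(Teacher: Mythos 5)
There is a genuine gap: your main argument only works over an initial window of length $O(1/p)$, whereas the theorem is a statement about the time-averaged waiting time for the given horizon $T$ (the paper emphasizes its results are independent of $T$). Your first-moment bound $\Pr[L_t\ge k]\le (tp)^k$ is useless once $tp$ approaches $1$, and indeed for $t\gg 1/p$ the graph does contain paths of length $\Theta(t)$, so the inequality $\E[q_t]\ge t-\E[L_t]$ becomes vacuous. Consequently your conclusion requires \emph{choosing} $T\le 1/(2p)$, which you are not free to do; for a large horizon $T$ the contribution of the first $1/(2p)$ steps to $\frac1T\sum_t \E[q_t]$ is only $O(1/(p^2T))$, which vanishes. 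In other words, you have shown the queue is large at start-up, not that it stays large on average, and the latter is the content of the theorem. The ``slicker'' alternative has the same defect in a different form: bounding $\E[w_1]\ge (1-p)/p$ controls a single node, and a node arriving when the queue is already large may be matched immediately through waiting nodes, so the per-node bound does not extend to later arrivals.

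The paper's proof supplies exactly the missing device: it couples the waiting time of each arrival with the \emph{current} queue size. Since $v_t$ needs an incoming edge before it can be served, and each of the $q_t$ waiting nodes and each subsequent arrival supplies one independently with probability $p$, one gets $\E[w_t\mid q_t]\ge 1/p-q_t$, hence $\E[w_t]+\E[q_t]\ge 1/p$ for \emph{every} $t$. Averaging over $t\le T$ and using the identity that the average waiting time equals the average queue size (footnoted in the model section) gives the $0.5/p$ bound at every horizon. If you want to repair your write-up, this coupling step (or some equivalent steady-state argument) is what must replace the longest-path/first-moment computation; the structural observation $P_t\le L_t$ by itself cannot yield a horizon-independent bound.
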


\begin{proof}
Fix any time $t$ and consider $v_t$ the node arriving at time $t$.
We need $v_t$ to have at least one incoming edge to have it served.
The expected number of coin flips until we have an incoming edge is
$1/p$. Therefore, we have $\E[w_t | q_t] \geq 1/p - q_t$, since we
immediately do $q_t$ coin flips, for the $q_t$ waiting nodes, and
the right-hand-side corresponds to a lower bound in which we
consider an edge to the $i$th waiting node as giving $w_t=i-q_t$
rather than giving $w_t=0$.

Taking the expectation with respect to the history up to time $t$,
we have $\E[w_t] \geq 1/p-\E[q_t]$. Averaging over all time steps we
have $\frac{1}{T} \sum_t (\E[w_t] +\E[q_t]) \geq \frac{1}{p}$. Since
the expected average waiting time and queue size are identical, we
have that the expected average waiting time is at least
$\frac{1/2}{p}$.
\end{proof}

%

\section{Greedy algorithm}

We now concentrate on the simple greedy longest-waiting-time-first algorithm (\greedy).  \greedy\ extends
$v^e_t$, the end of the path at time $t$, the first opportunity it
has, and in the event of multiple options always chooses the patient who has been waiting the longest.  Note that after it completes an extension, the new
end of the path, $v^e_{t+1}$, does not have any directed edges to
waiting nodes. We \AB{begin with} matching upper and lower bounds on the
expected waiting time of the greedy algorithm. \AB{These bounds are shown also in \cite{ashlagi2019matching} but we prove them here through a different argument that helps to set up our general methodology.}

\begin{theorem}
\greedy\ has an expected waiting time of
$\Theta(\frac{1}{p}\log\frac{1}{p})$, for $p<1/2$. In addition, with
probability $1-\delta$ the waiting time is at most
$O(\frac{1}{p}\log\frac{1}{\delta p})$,
\end{theorem}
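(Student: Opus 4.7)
Since the sum of waiting times equals the sum of queue sizes (the footnote identity), the expected average waiting time equals the expected queue size, so the plan is to bound $E[q_t]$ (in expectation and with high probability) by fitting $q_t$ into the framework of Theorem~\ref{thm:rand-walk}. Observe that $q_{t+1}=q_t+1-L_{t+1}$ where $L_{t+1}$ is the length of the greedy extension at step $t+1$: when $L_{t+1}=0$ we have $q_{t+1}=q_t+1$, and when $L_{t+1}\geq 1$ we have $q_{t+1}=q_t-(L_{t+1}-1)$, so $Z_t:=L_{t+1}-1\in[0,q_t]$ fits the framework exactly. The key invariant that makes this tractable is that after greedy finishes extending at time $t$, the endpoint $v_t^e$ has no outgoing edges to waiting nodes, so extension at time $t+1$ requires (and only requires) the fresh edge $v_t^e\to v_{t+1}$ (probability $p$); from $v_{t+1}$ onward the relevant edges to waiting nodes are determined at those nodes' arrival times and are thus mutually independent.

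\textbf{Upper bound $O((1/p)\log(1/p))$ in expectation.}
I would choose $M=\Theta((1/p)\log(1/p))$ (concretely $M=(4/p)\log(1/p)$ for small $p$) and $K=\Theta(1/p)$ (concretely $K=3/p$), and verify the $(M,K,\rho,\beta)$ condition with $\rho=p/2$. Conditioned on $q_t\geq M+1$, the event $\{Z_t\geq K\}$, i.e.\ $\{L_{t+1}\geq K+1\}$, happens when (i) the edge $v_t^e\to v_{t+1}$ is present (probability $p$) and (ii) the next $K$ greedy steps all continue; by independence and the invariant that at each step at least $M-K\geq M/2$ waiting targets remain, each step continues with probability at least $1-(1-p)^{M/2}\geq 1-p^2$, so (ii) succeeds with probability $\geq(1-p^2)^K\geq 1-O(p)$. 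This gives $\rho\geq p/2$ and $\rho K=3/2$, so $\beta=1/2\leq 3/5$, and Theorem~\ref{thm:rand-walk} yields $E[q_t]\leq M+K(1+\beta)/\beta=O((1/p)\log(1/p))$.

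\textbf{Matching lower bound $\Omega((1/p)\log(1/p))$.}
The plan is to show positive drift of $q_t$ below threshold: for $q_t=Q\leq (c/p)\log(1/p)$ with $c<1$, compute $E[L_{t+1}\mid q_t=Q]=p\sum_{k\geq 0}\prod_{j=1}^{k}\bigl(1-(1-p)^{Q-j+1}\bigr)$ and bound each factor by $1-p^{c}$ since $(1-p)^{Q-j+1}\geq(1-p)^{Q}\geq p^{c}$, giving a geometric tail with sum $p^{-c}$. Thus $E[L_{t+1}]\leq p^{1-c}<1$ for small $p$, so $E[q_{t+1}-q_t]\geq 1-p^{1-c}>0$ whenever $q_t$ is below the threshold. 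A standard drift / hitting-time argument then forces $E[q_t]\geq\Omega((1/p)\log(1/p))$ in steady state, which by the equality of averages gives the matching lower bound on expected waiting time.

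\textbf{High-probability individual bound.}
I would combine the high-probability half of Theorem~\ref{thm:rand-walk} with a two-phase analysis of how $v_t$ is absorbed. First, by Theorem~\ref{thm:rand-walk} with failure probability $\delta/2$, the queue size at the moment $v_t$ arrives is $q_{t-1}\leq M+K\log(4/\delta)=O((1/p)\log(1/(\delta p)))$ with probability $1-\delta/2$. Conditioning on this, track $R_s$, the number of waiting nodes strictly older than $v_t$ at time $t+s$, starting from $R_0\leq q_{t-1}$. Because greedy picks the oldest neighbor first, each greedy extension step from a current endpoint $u$ reduces $R_s$ by $1$ with probability $1-(1-p)^{R_s}$ (some older node has an incoming edge from $u$), while absorption of $v_t$ happens with probability $p(1-p)^{R_s}$ per extension step and probability $p$ once $R_s=0$. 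Summing geometric tail bounds for the two phases (reducing $R_s$ to $0$, then absorbing $v_t$) gives $w_t=O((1/p)\log(1/(\delta p)))$ with conditional failure probability $\delta/2$; a union bound yields the stated $1-\delta$ guarantee.

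\textbf{Main obstacle.} The upper and lower bounds on $E[q_t]$ are straightforward given Theorem~\ref{thm:rand-walk}; the main obstacle is the high-probability individual bound, because $w_t$ depends on the entire future of the process (not just on $q_{t-1}$), and because the rank-reduction events $\{R_{s+1}<R_s\}$ are correlated across time via the shared extension structure. The key is to separate the analysis into the two phases above and to exploit the independence of edges formed at different arrival times so that within each phase the relevant per-step success probabilities can be lower bounded by a constant (times $p$) in a way that supports standard geometric tail arguments.
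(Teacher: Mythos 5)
Your two expectation bounds follow essentially the paper's route: the upper bound is the same reduction of the queue size to the $(M,K,\rho,\beta)$ random walk of Theorem~\ref{thm:rand-walk}, with $M=\Theta(\frac1p\log\frac1p)$, $K=\Theta(1/p)$, $\rho=\Theta(p)$ (the paper takes $M=(1/p)\log(4/p)$, $K=4/p$, $\rho=p/e$, and justifies the freshness of out-edges by the same deferred-decisions observation you state as your ``key invariant''), and your lower-bound drift computation is the same geometric bound the paper uses, namely that the extension length has expectation at most $(1-p)^{-q_t}$, hence positive drift whenever $q_t\le (c/p)\log(1/p)$. The one soft spot on this side is your final step ``a standard drift / hitting-time argument then forces $E[q_t]\ge \Omega((1/p)\log(1/p))$ in steady state'': the paper does not appeal to stationarity (nothing guarantees it over a finite horizon $T$), but instead gives an explicit excursion argument --- partition time into maximal intervals where $q_t$ is below the threshold $\lambda$, use that increments are at most $+1$ so the queue sizes over the last $\lambda$ steps of an excursion sum to at least $\lambda^2/2$, and control the excursion length by Markov's inequality. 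You should supply this (or an equivalent renewal argument); it is an easily fillable gap, but as written it is asserted rather than proved.

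For the per-node high-probability claim your route is genuinely different from the paper's. The paper is terse here: it simply invokes the tail bound of Theorem~\ref{thm:rand-walk} for the queue size, leaving implicit the translation to an individual node's waiting time. Your two-phase argument, tracking $R_s$, the number of waiting nodes older than $v_t$, addresses exactly that translation and is arguably more faithful to what the statement asserts, since a queue-size tail bound does not by itself bound one node's wait. What your sketch still needs is the bridge from ``per greedy extension step'' to actual time: extension opportunities occur only when a new arrival receives an in-edge from the current endpoint, so counting successes per extension step does not control the elapsed time unless you also lower bound how many nodes are added to the path per unit time. The clean fix is the accounting identity that in any window of length $w$ the number of path additions equals $w$ minus the change in queue size; combined with the whp queue bound this converts your per-added-node probabilities (reduce $R_s$ with probability $1-(1-p)^{R_s}$, absorb $v_t$ with probability at least $p(1-p)^{R_s}$, both on fresh out-edges by deferred decisions) into a time bound, and your two phases then indeed sum to $O(\frac1p\log\frac{1}{\delta p})$. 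With those two repairs your proposal is correct, and its high-probability portion is in fact more explicit than the paper's own treatment.
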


\begin{proof}
We start by showing that the waiting time of \greedy\ is at least
$\Omega(\frac{1}{p}\log\frac{1}{p})$.

At time $t$, with probability $1-p$, node $v_t$ does not have an incoming edge from
 the end of the directed path $v^e_t$. In this case, the queue size
grows by $1$.  Otherwise,
with probability $p$, node $v_t$ has an incoming edge from $v^e_t$, and
$v_t$ and a path of length $path(v_t)$ extends the current path from
$v^e_t$. The change in the queue size $q_t$ is,
\[
\E[q_{t+1}|q_t]=q_t +(1-p) - p \E[path(v_t)]\;.
\]
We need to upper bound $path(v_t)$. Note that while generating the
path, we have a probability of at least $(1-p)^{q_t}$ of terminating
the path since we reached a node with zero outgoing
degree.\footnote{We are sampling the out edges of the node only when
we add it to the path.  This is legal because nodes are never
revisited and because we choose which neighbor to visit next based only on its time of arrival and not based on which edges it has.  Other selection rules may behave differently, 
as \cite{AndersonAGK17} show.} This implies that
\[
\E[q_{t+1}|q_t] \geq q_t +(1-p) - p \frac{1}{(1-p)^{q_t}}\;.
\]
Now, for $q_t < \lambda_\epsilon$, where $\lambda_\epsilon =(1/p)\ln
((1-p-\epsilon)/p)$, we have
\[
\E[q_{t+1}|q_t] \geq q_t +\epsilon\;.
\]

We can now partition the time into intervals, where $q_t <
\lambda_\epsilon$, or singletons where $q_t \geq \lambda_\epsilon$.
In the intervals, we have an increase of $q_t$ bounded by $1$
(deterministically) and at least $\epsilon$ (in expectation). Assume
we start an interval with a value $q_t= 0$ (this will be the worst
case). The expected length of the interval would be at most
$\lambda_\epsilon/\epsilon$ and the sum of the queue lengths would
be at least $\lambda^2_\epsilon/2$. With probability at least $1/2$
the length of the interval is at most $2\lambda_\epsilon/\epsilon$.
This implies that the average queue size in the interval is at least
\[
\frac{1}{2}\frac{\lambda^2_\epsilon/2}{2\lambda_\epsilon/\epsilon} =
\frac{\epsilon\lambda_\epsilon}{8}\;.
\]
For $\epsilon=1/2$ we get an expected lower bound of
$(1/16)\lambda_{0.5} =(1/(16p))\ln ((0.5-p)/p)$.

We now analyze the upper bound using $(M,K,\rho,\beta)$ random walk
and Theorem~\ref{thm:rand-walk}. Notice that given that $v^e_t$, the
end of the path, has an outgoing edge (probability $p$ at each time
$t$), the probability of extending by a path of length at least
$4/p$ is at least $(1-(1-p)^{q_t})^{4/p}$. For $q_t >(1/p)\log
(4/p)$ this is at least $1/e$. This implies that we have a
$(M,K,\rho,\beta)$ random walk for $M=(1/p)\log (4/p)$, $K=4/p$,
$\rho=p/e$ and $\beta=4/e-1<0.5$. From Theorem~\ref{thm:rand-walk}
we have the desired upper bound.
\end{proof}

\section{The CLEAR-ALL algorithm}

The {\tt CLEAR-ALL} algorithm waits until it can extend the current
path and serve all the existing nodes, i.e., using a Hamiltonian
path. This implies that we partition the time to phases, where the
algorithm {\tt CLEAR-ALL} serves all the waiting nodes. This implies
that each phase starts with an empty queue!

\begin{theorem}
The {\tt CLEAR-ALL} algorithm has an expected waiting time of
$\Theta(\frac{1}{p}\log\frac{1}{p})$, for $p<1/2$.
\end{theorem}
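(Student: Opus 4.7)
The plan is to partition time into \emph{phases} demarcated by the algorithm's sweep operations, reduce the waiting-time analysis to a question about the length of a phase, and show that each phase has length $\Theta(\frac{1}{p}\log\frac{1}{p})$. Since every arrival during a phase joins the queue (deterministically), if a phase has length $L$ then the waiting times of the $L$ nodes served at its end sum to $\binom{L}{2}$, giving a per-phase average waiting time of $(L-1)/2$. By a standard renewal argument the overall expected waiting time equals $\E[L^2]/(2\E[L])$, which is $\Theta(\E[L])$ provided $L$ is concentrated, so it suffices to pin down $\E[L]$ and its concentration.

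For the upper bound on $L$, fix the start of a phase at time $t_0$ with current path-end $v^e$ and empty queue. After $n$ further arrivals, the waiting set $W$ together with $v^e$ induces a random directed graph in which every directed edge is independently present with probability $p$, and the algorithm sweeps as soon as this graph admits a Hamiltonian path starting at $v^e$. The threshold for directed Hamiltonicity in $G(n,p)$ is $p = \Theta(\log n / n)$, so once $n \geq C\frac{1}{p}\log\frac{1}{p}$ for an absolute constant $C$ the required Hamiltonian path from $v^e$ exists with probability $1 - o(1)$; one can for instance adapt the DFS-based proof of \cite{Krivelevich16} used elsewhere in this paper, extended from ``long path'' to ``Hamiltonian path from a prescribed source'' using the fact that the source edges $v^e \to w$ for $w\in W$ are independent of the other edges. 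Integrating the tail gives $\E[L] = O(\frac{1}{p}\log\frac{1}{p})$, and a Chernoff-type tail gives enough concentration to apply the renewal identity above.

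For the lower bound on $L$, for a Hamiltonian path to exist, every waiting node must have in-degree at least $1$ in the graph on $W\cup\{v^e\}$. A waiting node's in-degree after $n$ arrivals is Binomial$(n,p)$, so the expected number of in-degree-zero nodes is $n(1-p)^{n-1}$. Taking $n = \frac{1}{4p}\log\frac{1}{p}$, this expectation is $\omega(1)$, and a Poisson/second-moment calculation shows that in-degree-zero nodes are present with probability bounded away from zero. Consequently no Hamiltonian path exists for such $n$, the phase must last at least $\Omega(\frac{1}{p}\log\frac{1}{p})$ steps with constant probability, and the expected average waiting time is $\Omega(\frac{1}{p}\log\frac{1}{p})$.

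The main obstacle is the Hamiltonicity statement, specifically getting a Hamiltonian path that is forced to start at the pre-specified vertex $v^e$. The cleanest route is not to invoke a general theorem as a black box but to re-run the DFS argument of \cite{Krivelevich16} seeded at $v^e$: DFS explores the reachable component from $v^e$ and, under the threshold assumption, with high probability leaves no unvisited vertex while producing a path of length $n$. Once this is in hand, the rest of the argument (phase accounting, renewal identity, second-moment lower bound) is routine.
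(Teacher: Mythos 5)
Your overall architecture --- phases ending in a full sweep, the renewal identity (average waiting time $\approx \E[\binom{L}{2}]/\E[L]$ for phase length $L$), and a threshold argument for when the waiting graph first admits a Hamiltonian path from $v^e$ --- is the same as the paper's; the paper is just terser, invoking the known Hamiltonicity threshold for random graphs \cite{KomlosS83,Bollabas84,Frieze1988} to conclude that the phase length is $\Theta(\frac{1}{p}\log\frac{1}{p})$, and leaving the phase accounting and the lower bound implicit. The genuine problem is the step you single out as the ``cleanest route'': re-running the DFS argument of \cite{Krivelevich16} seeded at $v^e$ cannot certify a Hamiltonian path at density $p=\Theta(\log n/n)$. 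Lemma~\ref{lemma:long-path} only gives a path of length $|V|-2k$ under the condition that every two disjoint $k$-sets are joined, and at this density the union bound forces $k=\Omega(\ln(n/k)/p)=\omega(1)$, so the DFS path necessarily misses many vertices; more basically, DFS visiting every vertex of the component reachable from $v^e$ does not make its stack a Hamiltonian path (the DFS tree need not be a path). Since \clearall\ must by definition serve \emph{all} waiting nodes, ``long path missing $2k$ vertices'' is exactly what is not allowed, so this step fails and Hamiltonicity at the threshold genuinely needs the rotation--extension machinery behind the cited results. The repair is to do what the paper does: use the threshold theorem as a black box --- e.g., apply the directed result of \cite{Frieze1988} to $W\cup\{v^e\}$, where a Hamilton cycle immediately yields a Hamiltonian path starting at $v^e$; this also disposes of the prescribed-source concern that motivated your DFS detour (alternatively, take a Hamilton cycle on $W$ and enter it through any out-edge of $v^e$, which exists with probability $1-(1-p)^{|W|}$). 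You do still need a quantitative failure probability (as in the paper's $n_\delta$ with the $\log(1/\delta)$ term) so that your tail integration for $\E[L]$ and $\E[L^2]$ is justified.

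A secondary, fixable gap is in your lower bound: exhibiting an in-degree-zero node at the single time $n_0=\frac{1}{4p}\log\frac{1}{p}$ shows the graph is not traceable at time $n_0$, but does not by itself show the phase lasted $n_0$ steps, because the sweep condition is checked at every arrival and the phase could have ended at some $n'<n_0$ (indeed with probability $p$ it ends at $n'=1$). You need non-traceability at all $n'\le n_0$ simultaneously. This is easy to add: a sweep at time $n'$ requires every one of the $n'$ waiting nodes to have in-degree at least one, an event of probability at most $(1-(1-p)^{n'})^{n'}$ by independence of in-neighborhoods; summing over $n'\le n_0$ gives $O(p)$ (dominated by the trivially short phases) plus a negligible term, so $\Pr[L\ge n_0]=1-O(p)$ and hence $\E[L]=\Omega(\frac{1}{p}\log\frac{1}{p})$, which combined with your renewal identity and $\E[L^2]\ge(\E[L])^2$ gives the lower bound. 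With these two repairs your write-up is correct and in fact more explicit than the paper's proof.
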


\begin{proof}
%
The algorithm {\tt CLEAR-ALL} waits until the waiting nodes have a
Hamiltonian path connecting all of them. From graph theory we know that for
an Erdos-Renyi graph $G(n,p)$ if
$p=\frac{\log(n)+\log\log(n)+O(\log(1/\delta))}{n}$ then with
probability $1-\delta$ we have a Hamiltonian path (see,
\cite{KomlosS83,Bollabas84,Frieze1988}).

This implies that when we have $
n_\delta=\frac{O(\log(1/p)+\log(1/\delta))}{p}$ nodes waiting, with
probability $1-\delta$ we have a Hamiltonian path. This implies that
the expected number of arrivals before we have a Hamiltonian path is
$\Theta(\frac{1}{p}\log (\frac{1}{p}))$. Each time we have a Hamiltonian path in the
graph of the waiting nodes, we extend the current path from the end
of the path $v^e_t$, using the Hamiltonian path and completely empty
the queue of waiting nodes.
\end{proof}

\section{Batch algorithm}

We now present the \batch\ algorithm, which we show achieves waiting time only $\Theta(\frac{1}{p})$.
The idea behind the algorithm is to wait for some time
and aggregate arrivals, and then in one time step to compute and add
a long path. The benefit, compared to \greedy, is that
we can plan better to find a longer path. The challenge is that now the graph on the nodes left over is no longer random, because the path is determined algorithmically based on structural properties of the nodes.  In contrast, one of the key features of \greedy\ is that because it selects which outgoing edge to take based solely on the arrival time of the incident node, the graph on unvisited nodes remains uniform random.   In fact, we will use this property later to show that if desired, we can replace the waiting step in \batch\ with runs of a greedy algorithm, producing a hybrid algorithm that always makes a match when one is available and yet still achieves expected waiting time $\Theta(\frac{1}{p})$.

The \batch\ algorithm has a parameter $c>0$ and works in phases.
At the start of each phase we wait $c/p$ time, for $c/p$ incoming
nodes to arrive. We then run a procedure {\tt PATH} that extends
the current path. Then we start a new phase. Different
implementation use different procedures {\tt PATH}.

The following is a description of an implementation of {\tt PATH}
which we call {\tt FAIR-PATH}. The procedure {\tt FAIR-PATH} works
as follows. Let $Q$ be the set of waiting nodes at the start of
the phase and let $V_{fp}$ be the set of $c/p$ nodes that arrived
since the start of the phase. We build a graph
$G_{fp}(V_{fp},E_{fp})$ where $V_{fp}$ are the $c/p$ new arriving
nodes. For each old node $v\in Q$, if there are new nodes
$u_1,u_2\in V_{fp}$, such that there are edges $u_1\rightarrow v$
and $v\rightarrow u_2$, then we pick a random ingoing edge to $v$,
say from $u_1$, and a random outgoing edge from $v$, say to $u_2$,
and add an edge from $u_1\rightarrow u_2$ to $E_{fp}$ and label it
by $v$. Namely, for each $v\in Q$, let $IN(v)=\{u_1\in V_{fp}:
u_1\rightarrow v\}$ and $OUT(v)=\{u_2\in V_{fp}: v\rightarrow
u_2\}$. If $IN(v)\neq \emptyset$ and $OUT(v)\neq \emptyset$ then we
select a random $u_1\in IN(v)$ and a random $u_2\in OUT(v)$ and add
the edge $u_1\rightarrow u_2$ to $E_{fp}$ and label it $v$. This
defines the edges $E_{fp}$, and if there are multiple parallel
edges, we select one such edge at random.

Let $v^e$ be the end of the path at the end of the previous phase.
We add $v^e$ to $V_{fp}$ and add its edges to $E_{fp}$, namely,
$OUT(v^e)=\{u\in V_{fp}: v^e\rightarrow u \}$.
If $OUT(v^e)$ is empty, {\tt FAIR-PATH} returns an empty extension.


We run the algorithm DFS-LP\footnote{DFS-LP runs a DFS algorithm,
and returns the longest path it observes during its run. See
Appendix \ref{app:long} for more discussions on the topic.} from
$v^e$, and let $path(v^e)$ be the path that it returns. We extend
the current directed path using $path(v^e)$ by adding for each edge
the vertex which is its label, i.e., the nodes that caused the
insertion of that edge.

Note that the extension path, $path(v^e)$, alternates between nodes
that arrive during the last phase, i.e., nodes from $V_{fp}$, and
nodes that arrive in previous phases, i.e., nodes in $Q$. If
$path(v^e)$ has $\ell$ edges in $G_{fp}$ then we are extending by
$2\ell-1$ nodes, where $\ell$ are from the recent phase, and
$\ell-1$ are from previous phases.

Let $|Q|$ be the number of nodes that remain from previous phases.
We would like to consider nodes from $Q$ that have at least one
incoming and one outgoing degree from $V_{fp}$. The expected number
of such nodes is at least $|Q|(1-2(1-p)^{c/p})\approx|Q|(1-2e^{-c})$
and for $c\geq 10$ with high probability it is at least
$0.9Q\triangleq m$.

We now show a simple property of a random graph where the number of
edges is fixed, but the actual edges are selected uniformly at
random. Specifically, we consider now a random graph with $n=c/p$
nodes and $m$ edges, where the edges are selected at random with
replacements, so there might be multiplicities. We show that if the
number of edges is large enough, any two disjoint subsets of size
$k$ will share an edge.

\begin{lemma}
\label{lemma:random-m}
 A random graph with $n$ nodes and $m\geq
(n^2/k \log (n/(k\delta))$ random edges, with probability
$1-\delta$, for any two disjoint sets for size $k$ there is an edge
joining them.
\end{lemma}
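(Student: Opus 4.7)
The plan is a direct union bound on pairs of disjoint $k$-subsets, coupled with an exponential tail on the event that no random edge lands in the bipartite slice between them.

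First, I would fix two disjoint sets $A,B \subseteq [n]$ of size $k$ each. Since each of the $m$ edges is drawn uniformly and independently from the $n^2$ possible directed pairs (with replacement), the probability that a single given edge goes from $A$ to $B$ is exactly $k^2/n^2$. Hence the probability that none of the $m$ edges lands from $A$ to $B$ is
\[
\left(1 - \frac{k^2}{n^2}\right)^m \;\le\; \exp\!\left(-\frac{m k^2}{n^2}\right).
\]
(If the intended statement allows edges in either direction, the exponent improves by a factor of $2$, which only helps.)

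Next, I would take a union bound over the number of ordered pairs of disjoint $k$-subsets of $[n]$, which is at most $\binom{n}{k}^2 \le (en/k)^{2k}$. The probability that the bad event occurs for \emph{some} such pair is then bounded by
\[
\left(\frac{en}{k}\right)^{2k} \exp\!\left(-\frac{m k^2}{n^2}\right).
\]
Setting this quantity to at most $\delta$ and taking logarithms gives the sufficient condition
\[
\frac{m k^2}{n^2} \;\ge\; 2k\bigl(1+\ln(n/k)\bigr) + \ln(1/\delta),
\]
which rearranges to $m \ge \Theta\!\bigl((n^2/k)\log(n/(k\delta))\bigr)$, matching the bound in the statement (here I am using that $n^2/k \ge n^2/k^2$ for $k\ge 1$ to absorb the $\ln(1/\delta)$ term, and folding constants into the single logarithm).

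There is no real obstacle here; the argument is a standard first-moment / union-bound calculation. The only mild subtlety is handling the ``with replacement'' aspect of the edge sampling, but this only makes the analysis cleaner since the $m$ edges are i.i.d., so the event that no edge lies between $A$ and $B$ factors as a product of independent events and yields the clean $\exp(-mk^2/n^2)$ bound without needing to worry about dependencies introduced by sampling without replacement.
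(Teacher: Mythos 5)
Your proposal is correct and follows essentially the same argument as the paper: a union bound over pairs of disjoint $k$-subsets, the per-edge probability $k^2/n^2$, and the bound $\binom{n}{k}\binom{n-k}{k}(1-k^2/n^2)^m \le (en/k)^{2k}e^{-mk^2/n^2} \le \delta$. Your version is only slightly more explicit in verifying the final inequality (up to constants absorbed into the stated bound), which is fine.
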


\begin{proof}
For the proof, we do a union bound over all pairs of disjoint
subsets of size $k$. For a fixed disjoint sets $S_1$ and $S_2$ of
size $k$, the probability that a given edge will select to connect
them is $k^2/n^2$. Therefore, the probability that there are two
such sets which do not share an edge is bounded by,
\[
{n \choose k}{{n-k} \choose k} (1-\frac{k^2}{n^2})^m \leq
(\frac{en}{k})^{2k} e^{-k^2 m /n^2} \leq \delta
\]
where the last inequality uses the assumption that $m\geq (n^2/k)
\log (n/(k\delta)$.
\end{proof}

\begin{theorem}
\label{thm:batch-algorithm}
For $p<0.04$ and $c>10$, the expected waiting time in the Batch
algorithm with parameter $c$ is $O(c/p)$ and with probability
$1-\delta$ it is at most $O((c/p)\log(1/\delta))$.
\end{theorem}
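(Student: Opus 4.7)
The plan is to model the queue size $Q_\tau$ at phase boundaries as a drift process and apply (a per-phase variant of) Theorem~\ref{thm:rand-walk}. Each phase contributes $n=c/p$ new arrivals and removes $2\ell-1$ nodes, where $\ell$ is the number of edges in the path found by \texttt{DFS-LP} in the auxiliary multigraph $G_{fp}$. The goal is to show that when $Q_\tau \geq M$ for some $M = O(c/p)$, with constant probability $\rho$ we have $2\ell - 1 - n \geq K$ for some $K = \Omega(c/p)$ chosen so that $\rho K > n$; this negative per-phase drift drives $Q_\tau$ to stay $O(c/p)$ in expectation and $O((c/p)\log(1/\delta))$ with probability $1-\delta$.

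First, I would handle the reduction from $Q_\tau$ to $m$, the number of edges in $G_{fp}$. Each old $v \in Q$ independently contributes an edge with probability $(1-(1-p)^{n})^2 \geq (1-e^{-c})^2$, which for $c\geq 10$ is bounded below by a constant close to $1$. A Chernoff bound then guarantees that $m \geq 0.9 Q_\tau$ with high probability whenever $Q_\tau$ is above some constant multiple of $c/p$. Conditional on $m$, each of the $m$ edges of $G_{fp}$ is independently uniform on $V_{fp}\times V_{fp}$ (the vertex label attached to each edge is irrelevant for \texttt{DFS-LP}), putting us squarely in the setting of Lemma~\ref{lemma:random-m}.

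Second, I would fix a small constant $\epsilon \in (0, 1/4)$ and apply Lemma~\ref{lemma:random-m} with $k = \epsilon n$. The lemma says that whenever $m \geq (n/\epsilon)\log(1/(\epsilon\delta'))$, every two disjoint size-$k$ subsets of $V_{fp}$ share an edge, with probability $1-\delta'$. Since $(n/\epsilon)\log(1/(\epsilon\delta')) = O(c/p)$ for constants $\epsilon,\delta'$, this connectivity property is triggered once $Q_\tau \geq M$ for some $M = O(c/p)$. Invoking the Krivelevich-style DFS argument from the preliminaries (Appendix~\ref{app:long}) --- ``if every two disjoint $k$-subsets share an edge, DFS finds a path on $\geq n-2k$ vertices'' --- yields $\ell \geq (1-2\epsilon)n$, and hence a net per-phase decrease of $Z_\tau = 2\ell - 1 - n \geq (1-4\epsilon)n - 1$. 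Choosing $\epsilon$ small enough (and $c>10$ giving us margin in the Chernoff step) makes $Z_\tau \geq K$ with probability $\rho$ where $\rho K$ exceeds the per-phase up-step $n$ by a constant factor $\beta$.

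Third, I would invoke the drift analysis behind Theorem~\ref{thm:rand-walk}, adapted to a walk whose positive increment is $+n$ rather than $+1$ (the argument goes through verbatim after rescaling, since the key step is $E[Q_{\tau+1}\mid Q_\tau] \leq Q_\tau - \beta$ when $Q_\tau \geq M$), concluding $E[Q_\tau] \leq M + O(K) = O(c/p)$ and $Q_\tau \leq M + O(K\log(1/\delta)) = O((c/p)\log(1/\delta))$ with probability $1-\delta$. Because within a phase the queue grows by at most $n$, the bound on $Q_{\tau(t)}$ transfers to $q_t$ up to an additive $c/p$, and the identity $\sum_t w_t = \sum_t q_t$ gives the expected-waiting-time bound. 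The per-node high-probability bound follows because \texttt{FAIR-PATH} picks the edges labeling old nodes uniformly, so any fixed waiting node is matched within $O(\log(1/\delta))$ consecutive phases with probability $1-\delta$ once the queue has stabilized at $O(c/p)$.

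The main obstacle is calibrating the constants $\epsilon$, $\delta'$, and the Chernoff slack simultaneously: we need $\rho$ close enough to $1$ (and $\ell$ large enough relative to $n$) that the single-phase negative drift $\rho K - n$ is a positive constant, which in turn forces the thresholds in Lemma~\ref{lemma:random-m} to scale as $O(c/p)$ rather than $O((c/p)\log(c/p))$. A secondary subtlety is the per-phase adaptation of Theorem~\ref{thm:rand-walk}; the original statement assumes $+1$ up-steps, and one must either rescale time by $n$ or revisit the argument in Appendix~\ref{app:random-walk} with the larger up-step.
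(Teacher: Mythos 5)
Your overall plan mirrors the paper's proof of Theorem~\ref{thm:batch-algorithm}: treat the queue at phase boundaries as a drift process, show that when the queue is large the auxiliary graph $G_{fp}$ has at least $0.9|Q|$ (near-uniform) edges, invoke Lemma~\ref{lemma:random-m} with $k=\Theta(n)$ to get the two-disjoint-sets property, extract a $\Theta(c/p)$-node extension via the DFS machinery of Appendix~\ref{app:long}, and conclude with a rescaled Theorem~\ref{thm:rand-walk}. Two concrete points, however, need repair.

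First, you invoke only the plain long-path statement (``if every two disjoint $k$-subsets share an edge, DFS finds a path on $\geq n-2k$ vertices'') and immediately credit the phase with an extension of $2\ell-1$ nodes. But {\tt FAIR-PATH} must extend from the current end-of-path $v^e$: DFS-LP is started at $v^e$, and the existence of a long path somewhere in $G_{fp}$ does not imply one starting at an out-neighbor of $v^e$ (nor that $OUT(v^e)\neq\emptyset$ at all). This is precisely why the paper proves and uses Corollary~\ref{cor:start-nodes} (all but at most $k$ vertices are valid start nodes for a path of length $|V|-2k$) and then bounds the probability that $OUT(v^e)$ is empty or lies entirely inside the bad set; these events consume a constant fraction of the per-phase success probability $\rho$ and cannot be omitted from its calibration.

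Second, your drift condition is miscalibrated. You define $K$ as a lower bound on the \emph{net} per-phase decrease $2\ell-1-n$ and ask for $\rho K>n$. Since an extension contains at most $n$ new nodes and at most $n-1$ old nodes, the net decrease is at most $n-1$, so $\rho K>n$ can never hold. What is actually needed is that the expected per-phase change $(1-\rho)n-\rho K$ be negative, i.e.\ $\rho K>(1-\rho)n$; equivalently, as the paper does, take the $n$ arrivals as the (scaled) up-step of size $1$ and the \emph{gross} extension length $\approx 1.6\,c/p$ as the down-step, giving $K=1.6$, $\rho\approx 0.8$, and $\rho K=1+\beta$ with $\beta=0.28$. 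With that correction and the start-node step above, your choice of $\epsilon$, the Chernoff slack, and the rescaled application of Theorem~\ref{thm:rand-walk} go through essentially as in the paper.
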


\begin{proof}
Let $Q_t$ be the number of waiting nodes at the end of phase $t$. We
would like to show that the $Q_t$ forms a $(M,K,\rho,\beta)$ random
walk. However, the increase of $Q_t$ can be $c/p$ (rather than $1$
in the $(M,K,\rho,\beta)$ random walk). For this reason we  scale
each $c/p$ nodes as a ``one unit'', and show the bound for the
$(M,K,\rho,\beta)$ random walk. At the end we multiply by $c/p$ to
get the correct bound.

Let $n=c/p$, $k=n/10$, and $M=(n^2/k) \log (n/(k\delta)=(10c/p) \log
(10/\delta)$. Once we scale down by $c/p$ and set $\delta=0.1$ we
have $M=10\log(100)$. We like to compute the probability of decrease
and its magnitude. We need the magnitude to be at least $c/p$ to
have a net decrease (which is $1$ after the scaling).

First we show that if there are many waiting nodes, then with high
probability we have many edges in $G_{fp}$. Assume that $|Q_t|\geq
M+1$ and $m=0.9|Q_t|$. For any $v\in Q_t$, the probability that
$IN(v)=\emptyset$ is $e^{-c}$ and similarly $OUT(v)=\emptyset$ is
$e^{-c}$. This implies that with probability at least $1-2e^{-c}$
there is an edge labeled by $v$. The probability that we have a
duplicate edge is $2|Q_t|/(c/p)^2$, so the expected number of edges
is at least $|Q_t| (1-p^2/c^2-2e^{-c})$. For $p<0.04$ and $c\geq 10$
we have that $|E_{fp}|< m$ with probability at most $2e^{-10}$.

By Lemma~\ref{lemma:random-m}, with probability $1-\delta=0.9$,
between any two subsets of size $k$ there is an edge.
By Corollary \ref{cor:start-nodes}, there exists a path of length
$2(n-2k)= 1.6c/p$ nodes, for all but a subset $S$ of at most $k$
nodes, as a start node. The probability that $OUT(v^e)\subset S$ is
$(1-(k/n))^{1/p}=0.9^{1/p}<0.1$. Therefore, with probability
$(1-0.9^{1/p})(1-\delta)(1-2e^{-10})>0.8$ the procedure {\tt
FAIR-PATH} will extend by $2(n-2k)= 1.6c/p$ nodes. This implies that
we have $\rho=0.8$ and $K=1.6$ (after scaling down by $c/p$). We
have that $\beta=0.28$.

By Theorem~\ref{thm:rand-walk}, for such a $(M,K,\rho,\beta)$ random
walk, we have that the expected value is at most $100\log(100)+
10=O(1)$ and with probability $1-\delta$ it is at most
$O(\log(1/\delta))$. Scaling back by $c/p$ derives the theorem.
\end{proof}

\section{Not a short path}

A clear drawback of the greedy algorithm is that in many cases it
generates rather short paths to be added. The Not A Short Path
(NASP) algorithm will overcome this weakness by requiring that the
length of the path that we add is ``not short''. Specifically, the
algorithm will have a parameter $c>0$ and it will add only paths of
length at least $\theta=c/p$. This will clearly overcome the issue
of adding short paths. The challenge is that now the duration of a
phase (the time between two consecutive extensions of the path) is a
random variable. It is worthwhile to compare the NASP algorithm to
the {\tt batch} algorithm. While in the {\tt batch} algorithm the
duration of a phase is fixed and the length of the extension is a
random variable, in NASP the duration of a phase is a random
variable and the length of the extension has a fixed lower bound (we
allow to add longer paths).


The algorithm NASP works in phases. In each phase, as in the {\tt
batch} algorithm, the goal is to build an extension to the path
built from both new and old nodes. The main difference is that a
phase does not have a pre-specified number of new nodes (unlike the
{\tt batch} algorithm which had exactly $c/p$ new node arrivals).
The phase ends when we find an extension path of length at least
$\theta=c/p$ nodes. Therefore in NASP the duration of a phase is a
random variable while the length of the path is always at least
$\theta=c/p$. In contrast, in the {\tt batch} algorithm, the size of
the phase is fixed to be $c/p$ while the length of the extension
path is a random variable.

The analysis bounds the expected duration of a phase, i.e., the
expected time it takes until we have such a path.
Let $Q_t$ be the set of waiting nodes at the start of the phase $t$.
Our analysis will have two cases depending on the number of waiting
nodes nodes $|Q_t|$ at the start of the phase $t$. If the number of
waiting nodes at the start of the phase is small, we show that the
expected duration of the phase is not too large. We will not claim
much about the outcome of such a phase, just that it ends in
expected $O(c/p)$ time. If the number of waiting nodes at the start
of the phase is large, we show that with reasonable probability the
number of waiting nodes will decrease (compared to the start of the
phase). We start by considering the duration of a phase when the
number of waiting nodes is arbitrary (actually, the worse case would
be no waiting nodes).

\begin{lemma}
\label{lemma:Gnp-path}
Let $c>100$.  After $1.2c/p$ new nodes, with probability at least
$1-e^{-30}$ we have a path of length $\theta=c/p$.
\end{lemma}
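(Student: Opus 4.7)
The plan is to prove Lemma~\ref{lemma:Gnp-path} via a DFS-based long-path argument in the spirit of Krivelevich~\cite{Krivelevich16}, adapted to the directed random graph. After $n=1.2c/p$ arrivals the induced subgraph on the new nodes is a random directed graph in which each ordered pair is joined independently with probability $p$; I want to show it contains a directed path on $\theta=c/p$ vertices with probability at least $1-e^{-30}$.

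First I would run a lazy DFS on this graph: while a vertex $v$ sits on top of the DFS stack, the procedure queries the next potential out-edge $v\to u$ with $u$ still unvisited and, if present, pushes $u$; once no unvisited out-neighbor of $v$ remains, $v$ is popped (and the procedure restarts from a fresh unvisited vertex whenever the stack empties). Throughout the run the vertices partition into $S$ (popped), $U$ (currently on the stack, which is always a directed path in $G$), and $T$ (unvisited). The key deterministic invariant is that at every moment there are no directed edges from $S$ to $T$: any such edge $v\to u$ would have been revealed while $v$ was on the stack with $u\in T$, forcing $u$ to be pushed and contradicting its present membership in $T$.

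Since the stack $U$ is itself a directed path, it suffices to show $|U|\ge\theta$ at some moment. Suppose not; consider the first moment $m^*$ at which $|S\cup U|=(n+\theta)/2$. Then $|T(m^*)|=(n-\theta)/2=0.1c/p$, and because $|U(m^*)|<\theta$ we have $|S(m^*)|\ge(n-\theta)/2+1$, so by the invariant the DFS has already revealed at least $((n-\theta)/2)^2 = 0.01c^2/p^2$ ``absent'' answers among $S\to T$ queries. At the same time, the total number of ``present'' answers revealed so far is bounded by the number of pushes, namely $|S\cup U|=(n+\theta)/2\le 1.1c/p$.

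I would then couple the DFS query outcomes with an i.i.d.\ Bernoulli$(p)$ sequence; failure of the desired event implies that a $\mathrm{Bin}(N_0,p)$ random variable with $N_0 = 0.01c^2/p^2$ is at most $1.1c/p$, well below its mean $0.01c^2/p$ once $c$ is a large enough constant. A standard multiplicative Chernoff bound then drives this tail below $e^{-30}$. The main obstacle I anticipate is the tightness of constants near the stated threshold $c>100$: with the raw pivot $(n+\theta)/2$ the binomial mean only separates from the threshold once $c$ is comfortably above $110$, so to get the lemma's concrete exponent $30$ I would need to squeeze extra slack out of the strict inequality $|S(m^*)|\ge(n-\theta)/2+1$ (which contributes $\Theta(n-\theta)$ additional revealed non-edges) and from the ``no'' queries generated by vertices still on the stack, or equivalently by shifting the pivot slightly past $(n+\theta)/2$.
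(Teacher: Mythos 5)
Your overall strategy (reveal edges lazily during a DFS, use the invariant that no edges go from popped to unvisited vertices, and turn failure into a lower-tail binomial event) is a legitimate alternative to the paper's argument, which instead takes a union bound over all pairs of disjoint subsets of size $k=0.1c/p$ to show that with probability $1-e^{-30}$ every two such sets are joined by an edge, and then invokes the deterministic DFS result (Lemma~\ref{lemma:long-path}) to get a path on $n-2k=c/p$ vertices. However, your version has a genuine quantitative gap exactly at the constants the lemma asserts, and the patches you propose cannot close it. With your pivot, failure forces $\mathrm{Bin}(N_0,p)\le 1.1c/p$ with $N_0=0.01c^2/p^2$, whose mean is $0.01c^2/p$; this exceeds the threshold only when $c>110$, so for $100<c\le 110$ the event you want to be rare is not even a lower-deviation event and no Chernoff bound can make it small. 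Shifting the pivot does not help: if you stop when $|T|=\alpha c/p$, the comparison becomes $(0.2-\alpha)\alpha c^2$ versus $(1.2-\alpha)c$, and minimizing $(1.2-\alpha)/\bigl((0.2-\alpha)\alpha\bigr)$ over $\alpha\in(0,0.2)$ still gives a requirement of roughly $c\ge 110$. The extra slack you mention (the ``$+1$'' in $|S|\ge (n-\theta)/2+1$ and the negative answers from vertices still on the stack) contributes only $O(c/p)$ additional revealed non-edges against the needed $\Theta(c^2/p^2)$, so it is lower order. Moreover, even once the mean clears the threshold, getting the concrete bound $e^{-30}$ from Chernoff needs either $c$ substantially larger (say $c\gtrsim 130$--$150$) or an additional assumption that $p$ is small, neither of which is in the statement.

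By contrast, the paper's route only needs the per-pair estimate ${n\choose k}{n-k\choose k}(1-p)^{k^2}\le\bigl((12e)^2e^{-0.1c}\bigr)^k$, i.e.\ essentially $0.1c>2\ln(12e)+3$, which holds with a lot of room for every $c>100$ and all relevant $p$, and the probabilistic and combinatorial parts are cleanly separated (randomness only in the two-subsets property, determinism in Lemma~\ref{lemma:long-path}). So to prove Lemma~\ref{lemma:Gnp-path} as stated you should either switch to that union-bound argument or be prepared to weaken the constant $c>100$ (and restrict $p$) in your binomial version; as written, your proof establishes a weaker statement than the lemma claims.
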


\begin{proof}
Note that we make no assumption about $|Q_t|$, the number of waiting
nodes at the start of the phase. For the proof we consider only the
new arriving nodes in the phase (implicitly assuming that
$|Q_t|=0$). After $n=1.2c/p$ we have a $G(n,p)$ random graph. The
probability that there exists two subsets of size $k=0.1c/p$ nodes
that do not share an edge is
\[
{n \choose k}{{n-k} \choose k} (1-p)^{k^2} \leq ((12e)^{2}
e^{-0.1c})^k \leq e^{-30}
\]
Therefore, with probability $1-e^{-30}$, we have that every two
subsets of size $k=0.1c/p$ share an edge. By
Lemma~\ref{lemma:long-path} This implies that the graph has a path
of length at least $c/p$.
\end{proof}

\begin{corollary}
\label{cor:NASP-phase-duration}
For $\theta= c/p$, the expected duration of a phase is at most
$1.21c/p$.
\end{corollary}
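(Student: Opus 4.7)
The plan is to iterate Lemma~\ref{lemma:Gnp-path} over consecutive batches of $1.2c/p$ new arrivals. Let $D$ denote the duration of the phase, i.e., the number of new arrivals until a path of length $\theta=c/p$ first exists. Applied to the first $1.2c/p$ arrivals, Lemma~\ref{lemma:Gnp-path} immediately gives $\Pr[D > 1.2c/p]\leq e^{-30}$.

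To convert this into a bound on $\E[D]$, I would partition the new arrivals into disjoint batches $B_1, B_2, \ldots$ each of size $1.2c/p$. In the arrival model of Section~2, edges inside $B_i$ are independent of edges inside $B_j$ for $i\neq j$ (they involve disjoint vertex pairs), so Lemma~\ref{lemma:Gnp-path} applied separately to each batch yields \emph{independent} events ``$B_i$ alone contains a path of length $c/p$,'' each occurring with probability at least $1-e^{-30}$. Since a path inside any single $B_i$ already ends the phase, we obtain the tail bound $\Pr[D > 1.2 k c/p]\leq e^{-30 k}$ for every integer $k\geq 1$. (This underestimates the algorithm's success, since a qualifying path could also straddle batches or make use of old waiting nodes from $Q_t$, but the bound suffices.)

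The final step is tail integration. From $\E[D]\leq (1.2c/p)\sum_{k\geq 0}\Pr[D>1.2kc/p]$ I would conclude
\[
\E[D] \;\leq\; \frac{1.2c}{p}\left(1+\sum_{k\geq 1}e^{-30k}\right) \;=\; \frac{1.2c/p}{1-e^{-30}} \;<\; \frac{1.21\,c}{p},
\]
where the last inequality holds because $e^{-30}$ is negligibly small.

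There is no real obstacle here: the corollary is essentially a geometric-tail wrapper around Lemma~\ref{lemma:Gnp-path}. The only point that deserves a sentence of justification is the independence of the per-batch failure events, which follows because disjoint vertex sets induce disjoint edge samples in the Erd\H{o}s--R\'enyi-style arrival process, and the verification that restricting attention to within-batch paths only loses us a pessimistic constant that the tiny $e^{-30}$ factor easily absorbs.
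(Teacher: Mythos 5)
Your proposal is correct and follows essentially the same route as the paper: the paper's proof simply invokes Lemma~\ref{lemma:Gnp-path} to say the phase ends within $1.2c/p$ arrivals with probability at least $1-e^{-30}$ and then asserts the $1.21c/p$ expectation bound, leaving implicit exactly the batch-repetition/geometric-tail argument you spell out. Your version just makes that implicit step explicit (disjoint batches, independence of per-batch events, tail integration), which is a faithful elaboration rather than a different approach.
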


\begin{proof}
By Lemma~\ref{lemma:Gnp-path} we have that after $1.2c/p$ new nodes
we have a path of length $c/p$ with probability $1-e^{-30}>0.999$.
This implies that the expected time is at most $1.21c/p$.
\end{proof}

In the above case we consider only the new nodes (implicitly assumed
$Q_t=\emptyset$). Not surprisingly, the number of waiting nodes is
likely to increase in such a case. The main benefit of NASP is that
in the case that there are many waiting nodes at the start of the
phase ($|Q_t|$ is large) then we expect that the number of waiting
nodes will decrease. Similar to the {\tt batch} algorithm ,we can
show,

\begin{claim}
\label{claim:long-path}
If $|Q_t|\geq (10c/p) \log (10/\delta)$ then
after $n=0.625 c/p$ new nodes, with probability $1-\delta$ the
expected length of the path is at least $c/p$.
\end{claim}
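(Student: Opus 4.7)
The approach is to mirror the Batch analysis (Theorem~\ref{thm:batch-algorithm}) with parameters recalibrated for NASP. Set $n = 0.625c/p$ and $k = n/10 = 0.0625c/p$, and build the auxiliary graph $G_{fp}$ on the $n$ newly-arrived nodes exactly as FAIR-PATH does, with edges labeled by nodes of $Q_t$. The parameter choice is arranged so that $2(n-2k) = c/p$: a path of $n - 2k$ edges in $G_{fp}$, re-expanded by inserting the labelling old nodes, becomes an extension of exactly $c/p$ new-and-old nodes in the path we maintain.

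\textbf{Step 1 (edge count).} For each $v\in Q_t$, the events $IN(v)=\emptyset$ and $OUT(v)=\emptyset$ each have probability $(1-p)^n \leq e^{-0.625c}$, which under the standing assumption $c>100$ (as in Lemma~\ref{lemma:Gnp-path}) is negligible. Summing/applying a standard Chernoff bound over the $|Q_t|$ independent contributions yields $|E_{fp}|\geq 0.9\,|Q_t|$ with probability at least $1-\delta/3$, say. Plugging in $|Q_t|\geq(10c/p)\log(10/\delta)$, this gives
\[
m \;:=\; |E_{fp}| \;\geq\; 9(c/p)\log(10/\delta).
\]
\textbf{Step 2 (apply Lemma~\ref{lemma:random-m}).} The threshold needed is
\[
\frac{n^2}{k}\log\!\frac{n}{k\delta} \;=\; 6.25\,(c/p)\log(10/\delta),
\]
which is comfortably below the edge count obtained in Step~1. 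Hence with probability at least $1-\delta/3$, every pair of disjoint $k$-subsets of $V_{fp}$ shares an edge in $G_{fp}$.

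\textbf{Step 3 (long path and start node).} Under the disjoint-subsets-share-an-edge property, Corollary~\ref{cor:start-nodes} (the Krivelevich-style ingredient also used in the Batch analysis) guarantees the existence of a path of $2(n-2k) = c/p$ nodes starting from every vertex outside a bad set $S$ with $|S|\leq k$. The starting condition at $v^e$ is easy to verify: the probability that $v^e$ has no edge into $V_{fp}\setminus S$ is at most $(1-p)^{n-k} \leq e^{-0.5625c}$, negligible for $c>100$. Union-bounding over the three events (``enough edges'', ``random-$m$ property'', ``good starting edge'') gives that DFS-LP finds an extension of length at least $c/p$ with probability at least $1-\delta$, as required.

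The main obstacle is purely bookkeeping: one has to check that the specific constant $0.625$ simultaneously (a)~leaves $2(n-2k)$ equal to $c/p$, (b)~keeps the bad-set-escape probability $(1-p)^{n-k}$ small, and (c)~combined with $|Q_t|\geq(10c/p)\log(10/\delta)$, produces enough edges in $G_{fp}$ to hit the Lemma~\ref{lemma:random-m} threshold with slack. All three hold for $c>100$ and $p$ small, so the structural ingredients from the Batch analysis carry over without modification.
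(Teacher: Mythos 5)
Your proposal is correct and is essentially the argument the paper intends: the paper gives no explicit proof of this claim, stating only that it follows ``similar to the \batch\ algorithm,'' and your write-up is exactly that recalibration, with $n=0.625c/p$, $k=n/10$ chosen so $2(n-2k)=c/p$, and the hypothesis $|Q_t|\geq(10c/p)\log(10/\delta)$ giving $m\geq 9(c/p)\log(10/\delta)$ edges, comfortably above the Lemma~\ref{lemma:random-m} threshold $6.25(c/p)\log(10/\delta)$, after which Corollary~\ref{cor:start-nodes} and the bound $(1-p)^{n-k}\leq e^{-0.5625c}$ on a bad start finish the argument just as in Theorem~\ref{thm:batch-algorithm}.
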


The main concern in the analysis has two folds. The first and the
easier case, is how long a phase would last, in expectation, since
the number of waiting nodes is the sum of the original waiting nodes
and the new arriving nodes. By Corollary
\ref{cor:NASP-phase-duration} this would be at most $O(c/p)$. The
second, and more involve, is bounding the expected number of waiting
nodes at the end of the phase.

\begin{theorem}
For $\theta=c/p$, for any time $t$, the expected number of waiting
nodes at time $t$ in NASP is at most $O((c/p) \log (1/\delta))$.
\end{theorem}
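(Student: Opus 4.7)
The plan is to track the queue size $Q^{(i)}$ at the end of each NASP phase, model its evolution as an $(M,K,\rho,\beta)$ random walk (after scaling by $c/p$, following the approach in the proof of Theorem~\ref{thm:batch-algorithm}), and then bound the within-phase queue growth to extend the bound from phase endpoints to arbitrary times $t$.

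First, I would write the recursion $Q^{(i+1)} = Q^{(i)} + D_{i+1} - L_{i+1}$, where $D_{i+1}$ is the duration of phase $i+1$ (i.e., the number of new arrivals during the phase) and $L_{i+1} \geq c/p$ is the length of the path extracted in that phase. Two inputs drive the drift analysis: Corollary~\ref{cor:NASP-phase-duration} gives $E[D_{i+1}] \leq 1.21\, c/p$ unconditionally (controlling growth when the queue is small), and Claim~\ref{claim:long-path} says that whenever $Q^{(i)} \geq M := (10c/p)\log(10/\delta)$, with probability at least $1-\delta$ the phase ends within $0.625\, c/p$ new arrivals and produces a path of length $\geq c/p$, giving a net decrease of at least $0.375\, c/p$ to the queue on that phase.

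The main obstacle is that a NASP phase contributes a variable number $D_{i+1}$ of new nodes to the queue, whereas Theorem~\ref{thm:rand-walk}'s random walk has unit increments per step. I would handle this by partitioning the arrival stream into sub-blocks of $c/p$ consecutive arrivals (as in the proof of Theorem~\ref{thm:batch-algorithm}) and treating each block as a single step of the scaled walk $\tilde{Q} := Q/(c/p)$, so each block contributes $+1$; the entire $-L_{i+1}/(c/p) \leq -1$ decrease is attributed to the block in which the phase terminates. Verifying the $(M,K,\rho,\beta)$ conditions then reduces to showing (via repeated application of Claim~\ref{claim:long-path}, using the fact that $Q$ can only have grown during the phase) that when $\tilde{Q}$ exceeds $M/(c/p)$, a phase-ending block occurs within a constant number of blocks with probability bounded away from zero; choosing $\delta$ a suitable small constant yields $\rho K \geq 1+\beta$ with $\beta$ bounded away from $0$. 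Theorem~\ref{thm:rand-walk} then gives $E[\tilde{Q}] = O(\log(1/\delta))$, i.e., $E[Q^{(i)}] = O((c/p)\log(1/\delta))$ at phase endpoints.

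Finally, to bound the queue at an arbitrary time $t$ lying mid-phase, I would observe that within phase $i+1$ the queue grows by at most $D_{i+1}$ above its starting value $Q^{(i)}$, and that $E[D_{i+1}] = O(c/p)$ holds uniformly by Corollary~\ref{cor:NASP-phase-duration}. Combining with the end-of-phase bound yields $E[Q_t] \leq E[Q^{(i)}] + E[D_{i+1}] = O((c/p)\log(1/\delta))$ as claimed.
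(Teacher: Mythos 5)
Your overall architecture is the paper's: track the queue at phase boundaries, feed Claim~\ref{claim:long-path} and Corollary~\ref{cor:NASP-phase-duration} into the $(M,K,\rho,\beta)$ framework of Theorem~\ref{thm:rand-walk}, and rescale at the end (your explicit handling of mid-phase times $t$ is a point the paper glosses over, and it is fine given the exponential tail on phase durations from Lemma~\ref{lemma:Gnp-path}). However, there is a genuine gap in how you verify the negative-drift condition. You choose the walk's unit to be $c/p$ arrivals per $+1$ step, while the only guaranteed decrease at a phase termination is $L/(c/p)\ge 1$, i.e.\ $K=1$ in your units. Then no choice of the constant $\delta$ can give $\rho K\ge 1+\beta$ with $\beta>0$, since $\rho\le 1$. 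Your weaker formulation, that ``a phase-ending block occurs within a constant number $B$ of blocks with probability bounded away from zero,'' is even less sufficient: those $B$ blocks contribute $+B$ to the scaled queue against a guaranteed decrease of only $1$, which is a nonnegative (indeed positive for $B\ge 2$) drift, so Theorem~\ref{thm:rand-walk} cannot be invoked.

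The quantitative fact you must exploit---and which you correctly state in your second paragraph (net decrease of at least $0.375\,c/p$ per phase when $|Q|\ge M$) but then lose in the block bookkeeping---is that a phase started with a large queue needs only $0.625\,c/p$ new arrivals while removing at least $c/p$ nodes. The paper encodes this by taking the unit of the scaled walk to be $X=0.625\,c/p$, i.e.\ one walk step per $X$ arrivals, so that the guaranteed extension $c/p=1.6X$ yields $K=1.6$, $\rho=1-\delta$, and $\beta=\rho K-1>0$; this is exactly the role of the constant $0.625$ in Claim~\ref{claim:long-path}, mirroring the $K=1.6$ computation in Theorem~\ref{thm:batch-algorithm}. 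Equivalently, you could keep blocks of size $c/p$ but then you must credit the (on average at least $1.6$) phase terminations occurring per $c/p$ arrivals while the queue remains above threshold, which your single-termination-per-block accounting does not do. With the unit changed to $0.625\,c/p$ (and the threshold $M$ read off accordingly), the remainder of your argument goes through and matches the paper.
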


\begin{proof}
In case that at the start of the phase we have $|Q_t|\leq 10(c/p)
\log (10/\delta)$, at the end of the phase we have an expected
increase in the number of waiting nodes, which is the expected
duration of the phase minus the length of the path, which is at most
$1.21c/p-c/p=0.21c/p$, using
Corollary~\ref{cor:NASP-phase-duration}.


The proof is similar to the proof of
Theorem~\ref{thm:batch-algorithm}. Let $Q_t$ be the number of
waiting nodes at the end of phase $t$. We would like to show that
the $Q_t$ forms a $(M,K,\rho,\beta)$ random walk. Again, we scale
each $0.625 c/p$ nodes as a ``one unit'', and show the bound for the
$(M,K,\rho,\beta)$ random walk. At the end we multiply by $0.625
c/p$ to get the true bound.

Let $X=0.625 c/p$. By Claim~\ref{claim:long-path} for $|Q_t|\geq
(16X) \log (10/\delta)$ then after $n=X$ new nodes, with probability
$1-\delta$ the expected length of the path is at least $1.6X$. Fix
$\delta=0.1$, after scaling down by $X$, then we have
$(M,K,\rho,\beta)$ random walk with $M=10 \log 100$, $K= 1.6$,
$\rho=0.99$ and $\beta=0.58$.

By Theorem~\ref{thm:rand-walk}, for such a $(M,K,\rho,\beta)$ random
walk, we have that the expected value is at most $M+ O(K)=O(1)$ and
with probability $1-\delta$ it is at most $O(\log(1/\delta))$.
Scaling back by $0.625c/p$ derives the theorem.
\end{proof}

\section{Combined {\tt Greedy-Batch} Algorithm}

In this section we present an algorithm that combines \greedy\ and \batch\ 
in a simple way, and achieves $O(1/p)$ expected waiting time.  While the guarantee
is not better than \batch, the combined algorithm has the appealing property that 
it always makes a match whenever a match from the current end-of-path exists.

The idea of the combined algorithm is to run \batch, maintaining a set $Q$ of ``old'' nodes and a set $V_{fp}$ of ``new'' nodes that have arrived since the start of the current phase.  However, instead of simply waiting for $V_{fp}$ to reach size $c/p$, if a node arrives with an in-edge from the current end-of-path $v^e$, then we run \greedy.   That is, if $V_{fp}$ has size less than $c/p$ and a node arrives that can be matched, we run \greedy; if $V_{fp}$ has size greater than or equal to $c/p$ and a node arrives that can be matched, we run {\tt FAIR-PATH}.

To analyze this algorithm, we use one of the key properties of \greedy, which is that given multiple out-edges to choose from, it always chooses the edge to the node that has been waiting longest, regardless of any structural properties of the nodes.  Therefore, after each run of \greedy, the nodes still in $V_{fp}$ (i.e., not yet matched) remain uniform random, except for the new end of path $v^e$.   Thus, except for an additional expected $1/p$ nodes needed to produce the first out-edge from $v^e$ after $V_{fp}$ has reached size $c/p$, the analysis of each phase of \batch\ is just as before.

The one change to the overall analysis of \batch\ is that the length of each phase is no longer exactly $c/p$ but is rather a random variable.  In particular, in addition to the expected $1/p$ nodes needed to produce the first out-edge from $v^e$ after $V_{fp}$ reaches size $c/p$, there is also the number of new nodes $\Delta$ matched in runs of \greedy\ during the phase.   This is potentially a concern because in the (rare) event that the set $Q$ of old nodes is large, if $\Delta$ is also large in this phase then this increases the average overall queue size.  However, note that by definition of \greedy, if the current node has any edge to an old node, then such an edge will be taken since the old nodes by definition have been waiting longer than the new nodes.  Moreover if $Q$ has size greater than $c'/p$ for sufficiently large $c'$, a new node will have probability at least $0.9$ of having an edge to an old node.  Therefore, if $Q$ is large, then the expected number of old nodes matched by \greedy\ in this phase is at least $0.9\Delta_{large}$, where $\Delta_{large}$ is the number of new nodes matched by \greedy\ in the current phase while $Q$ has size greater than $c'/p$.   Therefore, we can charge matches of new nodes by \greedy\ in intervals where $Q$ is large to progress in decreasing the size of $Q$.  In particular, if $Q$ is large then for every $c/p$ new nodes matched by \greedy\ we make at least as much progress in reducing $Q$ as in a phase of \batch.

%

Similar to Theorem~\ref{thm:batch-algorithm} we have,
\begin{theorem}
%
For $p<0.04$ and $c>10$, the expected waiting time in the {\tt
Greedy-Batch} algorithm with parameter $c$ is $O(c/p)$.
%
%
\end{theorem}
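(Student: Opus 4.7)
The plan is to recast $Q_t$, the queue size at the end of phase $t$, as an $(M,K,\rho,\beta)$ random walk whose parameters match those used in the proof of Theorem~\ref{thm:batch-algorithm} up to additive and multiplicative constants, and then invoke Theorem~\ref{thm:rand-walk}. There are two deviations from the pure \batch\ analysis: (a) the number of arrivals per phase is now random, not exactly $c/p$, because of extra arrivals consumed by interleaved \greedy\ matches, and (b) some edges inside a phase are revealed and acted on before $V_{fp}$ fills up. I would argue that neither deviation changes the asymptotic $O(c/p)$ bound.

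First I would make precise the ``uniformity'' claim from the narrative: because \greedy\ breaks ties by arrival time, the identity of the node it extends to does not condition on any edges other than those incident to the current end-of-path $v^e$. In particular, after any number of \greedy\ extensions during a phase, the unrevealed edges among the remaining nodes of $V_{fp}$ are still i.i.d.\ Bernoulli$(p)$. Hence once $|V_{fp}|$ first reaches $c/p$, the subgraph on $V_{fp}$ is a true $G(n,p)$, and Lemma~\ref{lemma:random-m} together with Corollary~\ref{cor:start-nodes} applies exactly as in the proof of Theorem~\ref{thm:batch-algorithm} to produce an extension of length $1.6c/p$. For the phase-length accounting, each arriving node is absorbed by \greedy\ only if $v^e$ has an out-edge to it, an event of probability $p$; the number $\Delta$ of absorbed arrivals in a phase is thus stochastically dominated by a geometric-like variable with mean $O(1)$. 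Adding the expected $1/p$ arrivals needed for the updated $v^e$ to acquire its first out-edge into $V_{fp}$ once $|V_{fp}| \geq c/p$, the expected phase duration is $c/p + O(1/p) = O(c/p)$.

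The crucial case, and where the main obstacle lies, is phases in which $|Q|$ is large. Here the worry is that the extra \greedy\ matches could swell $V_{fp}$ and nullify the queue decrease that \batch\ normally produces. I would resolve this by the charging argument sketched in the paragraphs preceding the theorem. Fix a threshold $c'/p$ with $c'$ large enough that $(1-p)^{c'/p} \leq 0.1$. When \greedy\ extends to a fresh arrival $v_t$, the new endpoint is $v_t$; with probability at least $0.9$, $v_t$ has an out-edge into $Q$, and since nodes in $Q$ have been waiting strictly longer than anything in $V_{fp}$, \greedy\ must next extend into $Q$. Thus in expectation, each \greedy\ match of a new node during a large-$Q$ phase is accompanied by at least $0.9$ matches out of $Q$. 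Combining this charge with the standard \batch\ decrement proven in Theorem~\ref{thm:batch-algorithm}, the net expected drop in $|Q|$ per phase starting with $|Q|$ above the threshold is still a constant multiple of $c/p$, so after scaling by $c/p$ the sequence $Q_t$ is an $(M,K,\rho,\beta)$ random walk with parameters in the same regime as before ($M = O(1)$, $K = O(1)$, $\rho = \Omega(1)$, $\beta > 0$, with $\rho$ mildly degraded to absorb the event that $\Delta$ is atypically large, controlled by a Chernoff bound).

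Applying Theorem~\ref{thm:rand-walk} and rescaling by $c/p$ then yields $\E[Q_t] = O(c/p)$, and since expected average queue size equals expected average waiting time the theorem follows. The main obstacle I anticipate is making the charging argument fully rigorous in the face of correlated randomness: the events ``a new arrival $v_t$ is absorbed by \greedy'' and ``$v_t$ has an out-edge into $Q$'' both involve edges incident to $v_t$, so one must condition carefully on the filtration revealed just before $v_t$'s edges are exposed so that the $0.9$ bound holds as a true conditional probability rather than merely a marginal one.
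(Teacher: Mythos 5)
Your proposal follows essentially the same route as the paper: the paper's own justification is precisely the narrative preceding the theorem statement---uniformity of the unmatched nodes in $V_{fp}$ because \greedy\ selects solely by arrival time, an extra expected $1/p$ arrivals for the first out-edge from $v^e$ once $V_{fp}$ reaches size $c/p$, and the charging argument that when $|Q|>c'/p$ each new node matched by \greedy\ is accompanied in expectation by at least $0.9$ matches of old nodes---followed by the same reduction to an $(M,K,\rho,\beta)$ random walk and Theorem~\ref{thm:rand-walk} as in Theorem~\ref{thm:batch-algorithm}. Your write-up reproduces this argument (and is, if anything, a bit more explicit about the conditioning/filtration point), so it is essentially the paper's own proof.
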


\section{Multiple altruistic donors}

%
%
%
%
%



Recall that we extend our model as follows. Assume we have $R$
multiple donors. Each donor will create a path, so we have $R$
disjoint paths.

 It is
very surprising that having a small number (less than $1/p$) of
multiple donors does not significantly reduce the expected waiting
time. On the other hand, if we have a large number of multiple
donors (more than $(1/p)\log (1/p)$) then the greedy algorithm have
a constant expected waiting time. We remark, that conditioned on the
fact that a node is not matched immediately, the waiting time of the
mode is $\Omega(1/p)$.

\begin{theorem}
For $R\leq 1/p$ the expected waiting time of any algorithm is at
least $\Omega(1/p)$. For $R\ge (\log (1/p))/p$ the expected waiting
time of the greedy algorithm is $O(1)$.
\end{theorem}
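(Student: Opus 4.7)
The theorem has two parts, which I would address separately.

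\emph{Lower bound.} My plan is to adapt the coin-flip argument in the proof of Theorem~\ref{thm:lower-bound-any}, adding a conditioning step to absorb the $R$ path ends. Fix time $t$ and order the potential in-edges to $v_t$ canonically: first the $R$ flips from the current path ends, then the $q_t$ flips from currently waiting nodes, then flips from future arrivals in order of arrival. Let $X$ be the index of the first success; $X$ is geometric with mean $1/p$. Since $v_t$ needs some in-edge to be served, $w_t \ge \max(0, X-R-q_t)$ deterministically. Condition on $\{X>R\}$: by memorylessness $X-R$ is geometric with mean $1/p$, so $\E[w_t \mid q_t, X>R] \ge 1/p - q_t$. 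For $R \le 1/p$ and $p<1/2$, the function $(1-p)^{1/p}$ is decreasing on $(0,1/2]$ and equals $1/4$ at $p=1/2$, so $\Pr(X>R) = (1-p)^R \ge 1/4$. Taking expectations and using convexity of $\max(0,\cdot)$, we get $\E[w_t] \ge \tfrac{1}{4}\max(0, 1/p-\E[q_t])$, which combined with the pointwise inequality $\E[w_t]+\tfrac{1}{4}\E[q_t] \ge \tfrac{1}{4p}$ and the identity $\sum_t w_t = \sum_t q_t$, gives $\tfrac{5}{4}\sum_t \E[w_t] \ge T/(4p)$, hence the expected average waiting time is at least $1/(5p)=\Omega(1/p)$.

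\emph{Upper bound.} For $R \ge (\log(1/p))/p$ with \greedy, I would analyze the queue-size process $q_t$ via drift. A new arrival $v_t$ has an independent Bernoulli$(p)$ in-edge from each of the $R$ current path ends, so $P_0 := \Pr(\text{no path end has edge to } v_t) = (1-p)^R \le e^{-pR} \le p$, i.e., $v_t$ is matched immediately with probability at least $1-p$. When matched, $v_t$ becomes a new path end and triggers a \greedy\ cascade on the waiting queue; as in the \greedy\ analysis, the longest-waiting-first rule depends only on arrival times, so successive cascade path ends retain i.i.d.\ Bernoulli$(p)$ out-edges to the remaining waiting nodes. Letting $L_q$ denote the expected cascade length from a fresh path end with $q$ waiting nodes, this yields the drift $\E[q_{t+1}-q_t\mid q_t=q] \le P_0 - (1-P_0)\E[L_q]$; since $\E[L_q] \ge 1-(1-p)^q$, the drift is negative (approximately $-p$) for $q \ge 2$ and only mildly positive (approximately $p^2$) at $q=1$. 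A throughput-balance computation then shows that the stationary distribution has $\E[q_t]=O(1)$ (in fact approximately Poisson$(1)$ for small $p$), and since $\sum_t w_t = \sum_t q_t$ we conclude $\E[\bar w] = O(1)$.

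The main obstacle is the cascade bookkeeping in the upper bound: one must justify that the out-edges along the cascade remain i.i.d.\ Bernoulli$(p)$ so that the recurrence for $\E[L_q]$ is valid. The key structural observation is that ``being in the queue'' is determined solely by the waiting node's in-edges from past path ends having all been zero; this imposes no conditioning on its out-edges to other waiting or future nodes, which is precisely what \greedy's arrival-time-based selection rule preserves.
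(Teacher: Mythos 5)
Your lower-bound argument is correct and is essentially the paper's own proof made quantitative: the paper likewise conditions on the arriving node having no in-edge from the $R\le 1/p$ path ends (probability $(1-p)^R=\Omega(1)$), uses memorylessness to get an expected $1/p$ additional wait, and then combines $\E[w_t]+\E[q_t]=\Omega(1/p)$ with the identity $\sum_t w_t=\sum_t q_t$ exactly as in Theorem~\ref{thm:lower-bound-any}. Your deferred-decisions observation for \greedy\ is also fine (and is the same point as the paper's footnote); note only that a node can remain waiting even though some served node had an edge to it (greedy may have preferred an older compatible node), so ``still in the queue'' is not literally ``all in-edges from past path ends were zero'' --- but the property you actually need, that waiting nodes' out-edges are never examined, does hold.

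The gap is in the upper bound, in the step ``a throughput-balance computation then shows $\E[q_t]=O(1)$.'' The drift bound you establish, $\E[q_{t+1}-q_t\mid q_t=q]\le P_0-(1-P_0)\bigl(1-(1-p)^q\bigr)$, is negative for $q\ge 2$ but only of magnitude $\Theta(p)$ there, while upward jumps have size $1$. Standard drift/Lyapunov arguments (including the paper's Theorem~\ref{thm:rand-walk} machinery) convert such a bound only into $\E[q_t]=O(1/p)$, which via $\sum_t w_t=\sum_t q_t$ yields $O(1/p)$ expected waiting time, not the claimed $O(1)$. To get $O(1)$ you must exploit that the downward transition probability grows linearly in $q$ (an M/M/$\infty$-type birth--death balance), and the process is not literally such a chain: cascades can remove several nodes at once, transition probabilities depend on the configuration and history rather than on $q$ alone, and a stationary distribution is being invoked without justification. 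So as written the key quantitative step is asserted rather than proved.

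The paper avoids this entirely by bounding each node's waiting time directly rather than the queue: with $R\ge(\log(1/p))/p$, a node fails to be matched on arrival with probability $(1-p)^R\le p$, and conditioned on that, every subsequent arrival independently creates a serving opportunity for it with probability at least $\bigl(1-(1-p)^R\bigr)p\ge(1-p)p$, so the conditional expected wait is $O(1/p)$ and hence $\E[w_t]\le p\cdot O(1/p)=O(1)$. If you prefer your queue-based route, the cleanest repair is essentially this same computation in Little's-law form: only the (at most $p$-fraction in expectation) of arrivals not matched immediately ever occupy the queue, and each such node stays $O(1/p)$ steps in expectation, giving $\E[q_t]=O(1)$; the drift inequality alone does not suffice.
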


\begin{proof}
Consider the case that $R\leq 1/p$. Assume that the number of
waiting nodes $q_t$ is at most $1/p$ (otherwise we are done). When a
new node arrives, with probability at least $1/e$ it does not have
any incoming edge, and therefore it clearly cannot be immediately
added to any of the current $R$ paths.
Conditioned on the fact that the new incoming node does not have any
incoming edges, the expected time until the new node will have some
incoming edge is $1/p$, so the expected waiting time is at least
$1/(ep)$. Since either $q_t\geq 1/p$ or if $q_t\leq 1/p$ then with
probability at least $1/e$ we have $w_t\geq 1/p$, which implies that
$E[q_t]+E[w_t]\geq 1/(ep)$. Similar to
Theorem~\ref{thm:lower-bound-any}, we have that the expected waiting
time is at least $\Omega(1/p)$.

For $R\geq (1/p) \log (1/p)$ we have that the probability that a new
node $v_t$ is not matched immediately at time $t$ is $(1-p)^R\leq
p$. At any future time $\tau>t$, the probability that $v_t$ is added
to some path is at least the probability that $v_\tau$ is
immediately add and there is an edge $v_\tau\rightarrow v_t$, i.e.,
$(1-(1-p)^R)p>(1-p)p$. This implies that the waiting time of $v_t$,
until it is matched is at most $1/((1-p)p)$. Therefore the expected
waiting time is $O(1)$. Equivalently, the expected queue size is
$O(1)$.

We remark that the expected additional waiting for a node which is not matched immediately, is $\Omega(1/p)$.
Therefore, the  $O(1)$  waiting time is mainly due to nodes which are matched immediately as they arrive.
\end{proof}

\section{Future Directions}
In this paper we have focused on a clean sparse random graph model
in which each edge $(u,v)$ is present with some small probability
$p$.  This model can be viewed as a setting in which patients are
all highly sensitized (so $p$ is low) and all bring an O-donor (so
we do not need to consider blood-type incompatibility), and was
studied in the static case in \cite{AGRR12}.

A more complex model would incorporate blood-type incompatibility,
as done in the static dense-graph case in
\cite{AshlagiR11,AshlagiR14}.  One challenge here from the online
perspective is that if the probability of an O-donor is the same as
the probability of an O-patient, then no method can produce bounded
queue sizes.  In particular, if we define the random variable $X$ to
be the number of O-patients seen minus the number of O-donors seen,
or zero if that difference is negative, then the queue size must be
at least $X$, and after $t$ time-steps $E[X] = \Omega(\sqrt{t})$.
So, no matter what algorithm is used for extending the altruistic
donor path, queue sizes will grow with $t$.

However, an interesting model to consider for future work would be
one where patients are ``encouraged'' to bring O-donors, though not
100\% of them do.  An interesting question there would be how large
a fraction of O-donors would be needed to achieve the bounded queue
sizes we get in the model studied here.

\bibliographystyle{plain}
\bibliography{path-bib}

\newpage
\appendix
\section{Long paths in random graphs}
\label{app:long}

In this Appendix we include the results regarding long paths in
random graphs. The following lemma from \cite{Krivelevich16} has the
essence of the methodology of generating long paths using DFS.

\begin{lemma}[\cite{Krivelevich16}]
\label{lemma:long-path}
Given a graph $G(V,E)$ such that for any two
disjoint subsets $S_1,S_2\subset V$ of size at least $k$ there is an
edge, then the DFS will return a path of length at least $|V|-2k$.
\end{lemma}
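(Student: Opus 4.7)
The plan is to analyze the DFS run at a carefully chosen intermediate moment. At any time during DFS on $G$, partition $V$ into three sets: $S$, the finished vertices already popped from the stack; $U$, the vertices currently on the stack; and $T$, the undiscovered vertices. Two structural observations drive the argument. First, at every instant $U$ is itself a path in $G$, because consecutive vertices on the stack are connected by DFS tree edges, which are edges of $G$. Second, at every instant there are no edges of $G$ between $S$ and $T$: at the moment a vertex $s$ was popped, all of its neighbors had been discovered (otherwise DFS would have descended into an undiscovered neighbor of $s$ before finishing it), and since $T$ only shrinks over time, any $t$ currently in $T$ was also in $T$ at the moment $s$ finished, so $s$ and $t$ cannot be adjacent.

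Combining the second observation with the hypothesis yields the key constraint: we cannot simultaneously have $|S|\ge k$ and $|T|\ge k$, for otherwise these would be two disjoint sets of size at least $k$ with no edge between them, contradicting the hypothesis. Next I would track the sizes $|S|$ and $|T|$ through the run. Each DFS step is either a discovery (increments $|U|$, decrements $|T|$) or a finishing (decrements $|U|$, increments $|S|$), so $|S|$ and $|T|$ change by exactly one, never simultaneously, with $|S|$ non-decreasing and $|T|$ non-increasing. Let $\tau$ be the first instant at which $|S|$ reaches $k$ (if this never happens then $|V|<k$ and the conclusion is vacuous). The step at $\tau$ is a finishing step and does not change $|T|$; applying the constraint at time $\tau$ itself forces $|T|<k$, so just before $\tau$ we have $|S|=k-1<k$ and $|T|<k$ as well.

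At this chosen instant $|U|=|V|-|S|-|T|>|V|-2k$, and by the first observation $U$ is a path in $G$. Since DFS-LP tracks the longest path (equivalently, the maximum stack size) encountered during its execution, it returns a path of length at least $|V|-2k$. The main obstacle is articulating the second observation carefully---it rests on the standard DFS invariant that a finished vertex has had all its neighbors discovered---and pairing it correctly with the hypothesis (the argument is cleanest in the undirected reading of the hypothesis, as stated); once this is in hand, identifying the right DFS snapshot is a clean monotone/intermediate-value argument on the sequences $|S|$ and $|T|$.
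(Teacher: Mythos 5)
Your proof is correct and is essentially the argument from \cite{Krivelevich16} that the paper cites in lieu of giving its own proof: the partition into finished vertices $S$, stack $U$, and undiscovered vertices $T$, the invariants that $U$ spans a path and that there are no edges between $S$ and $T$, and a stopping-time argument forcing $|S|<k$ and $|T|<k$ simultaneously so that $|U|>|V|-2k$. The only cosmetic differences are your choice of the first instant at which $|S|=k$ (the cited proof uses the instant $|S|=|T|$) and, as you note, for the paper's directed applications one just observes that a finished vertex has no out-edges into $T$ and applies the hypothesis to the ordered pair $(S,T)$.
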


\begin{lemma}
Given a graph $G(V,E)$ such that between any two disjoint subsets
$S_1,S_2\subset V$ of size at least $k$ there is an edge, assuming
that $|V|\geq 3k$, then $G$ has a connected component of size at
least $|V|-k+1$
\end{lemma}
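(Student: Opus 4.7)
\medskip
\textbf{Proof plan.} My approach is by contradiction: assume every connected component of $G$ has size at most $|V|-k$, and then construct two disjoint subsets $S_1,S_2 \subset V$, each of size at least $k$, with no edge between them, contradicting the hypothesis. The key observation is that a partition of $V$ into unions of connected components automatically has no crossing edges, so it suffices to show such a partition into two ``large'' pieces exists.

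The argument splits into two cases according to the size of the largest component $C^*$. In Case~1, $|C^*| \geq k$. Since we are assuming $|C^*| \leq |V|-k$, the set $V \setminus C^*$ has at least $k$ vertices. Taking $S_1$ to be any $k$ vertices of $C^*$ and $S_2 = V \setminus C^*$ gives two disjoint sets of size at least $k$; because $S_2$ is a union of components other than $C^*$, there are no edges between $S_1$ and $S_2$, contradicting the assumption.

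In Case~2, every component has size strictly less than $k$. Here I would build the partition greedily: start with $A = \emptyset$ and add components to $A$ one at a time until $|A| \geq k$. Before adding the final component we had $|A| < k$, and that final component contributes fewer than $k$ vertices, so $|A| < 2k$. Letting $B = V \setminus A$, we get $|B| > |V| - 2k \geq 3k - 2k = k$, using the hypothesis $|V| \geq 3k$. Both $A$ and $B$ are unions of distinct connected components, so there are no edges crossing them, again contradicting the hypothesis.

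The main thing to be careful about is verifying that the greedy grouping in Case~2 does not overshoot: the bound $|A| < 2k$ relies on each individual component having size less than $k$, which is exactly the defining assumption of Case~2, and the bound $|B| > k$ requires precisely the $|V| \geq 3k$ assumption, explaining why that hypothesis is essential. Combining both cases, no component can have size at most $|V|-k$, so some component has size at least $|V|-k+1$, as claimed.
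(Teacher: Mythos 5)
Your proof is correct and follows essentially the same route as the paper: contradiction, a case split on whether the largest component has size at least $k$, and a greedy grouping of small components (using $|V|\geq 3k$) into two component-unions of size at least $k$ with no crossing edge. Your version is in fact slightly cleaner, since taking only $k$ vertices of the largest component merges the paper's separate $m=2$ and $m\geq 3$ large-component cases into one.
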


\begin{proof}
Assume for contradiction that all the connected components of $G$
are of size at most $|V|-k$. Let $C_1, \ldots , C_m$ be the
connected components of $G$. Clearly $m\geq 2$, otherwise we have a
single connected component of size $|V|$. If $m=2$ then
$|C_1|+|C_2|=|V|$, therefore for one connected component, say $C_1$
we have $|C_1|\geq |V|/2\geq k$. If $|C_2|\leq k-1$ we are done,
since $|C_1|=|V|-|C_2|\geq |V|-k-1$. Otherwise $|C_2|\geq k$. This
implies that we have two subsets, $C_1$ and $C_2$, each of size at
least $k$, which do not share an edge. contradiction.

For $m\geq 3$, assume that $|C_1|\geq \cdots \geq |C_m|$. Clearly,
if $|C_1|\geq |V|-k+1$ we are done. If $|C_1| \geq k$ but $|C_1|\leq
|V|-k$, then we have a contradiction by considering $S_1=C_1$ and
$S_2= V-C_1$, since $|S_2|=|V|-|C_1|\geq k$. Otherwise $C_1\leq
k-1$. Consider the index $r$ such that $\sum_{i=1}^r |C_i|\leq k-1$
and $\sum_{i=1}^{r+1} |C_i|\geq k$. Note that this implies that
since all the connected components are of size at most $k-1$, we
have that $\sum_{i=1}^{r+1} |C_i|\leq 2k-2$. Since $|V|\geq 3k$, we
have that $\sum_{i=r+2}^{m} |C_i|\geq k$. This implies that for
$S_1=\bigcup_{i=1}^{r+1} C_i$ and $S_2=\bigcup_{i=r+2}^m C_i$ we
have a contradiction.
\end{proof}

\begin{corollary}
\label{cor:cc}
Given a graph $G(V,E)$ such that between any two disjoint subsets
$S_1,S_2\subset V$ of size at least $k$ there is an edge, assuming
that $|V|\geq 3k$, any set of $S_3$ of at least $k$ nodes has some
$v\in S_3$ which belongs to a connected component of size at least
$|V|-k+1$.
\end{corollary}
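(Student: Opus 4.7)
The plan is to derive Corollary \ref{cor:cc} as an almost immediate consequence of the preceding lemma, which already establishes that under the two-disjoint-subsets edge condition together with $|V|\ge 3k$, the graph $G$ contains a connected component of size at least $|V|-k+1$. Call such a component $C^{\ast}$.

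First I would observe that everything outside $C^{\ast}$ forms a (possibly empty) set of size at most $|V|-(|V|-k+1)=k-1$. So the ``bad'' region in which a vertex could fail to lie in the big component has strictly fewer than $k$ vertices. Then I would invoke a simple pigeonhole-style comparison: any set $S_3\subseteq V$ with $|S_3|\ge k$ strictly exceeds the size of $V\setminus C^{\ast}$, and therefore $S_3$ cannot be entirely contained in $V\setminus C^{\ast}$. Hence $S_3\cap C^{\ast}\neq\emptyset$, and any $v$ in this intersection is the vertex demanded by the corollary.

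There is essentially no technical obstacle here; the only thing worth being a little careful about is that the preceding lemma really does deliver a single component of size at least $|V|-k+1$ (not just some collection whose union is that large), so that the set $V\setminus C^{\ast}$ is well-defined and of size at most $k-1$. Once that is in hand, the corollary is a one-line pigeonhole argument, and no further use of the edge condition or of the DFS machinery is needed.
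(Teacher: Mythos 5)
Your proposal is correct and matches the intended derivation: the paper states Corollary~\ref{cor:cc} without an explicit proof precisely because it follows from the preceding lemma (a single component $C^{\ast}$ of size at least $|V|-k+1$, hence $|V\setminus C^{\ast}|\leq k-1$) by the same pigeonhole observation you give. No gap; your care about the lemma producing one component, not a union, is exactly the right point to check.
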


\begin{corollary}
\label{cor:start-nodes}
Given a graph $G(V,E)$ such that between any two disjoint subsets
$S_1,S_2\subset V$ of size at least $k$ there is an edge, assuming
that $|V|\geq 3k$, for any set of $S_3$ of at least $k$ nodes has
some $v\in S_3$ which has a path of length $|V|-2k$.
\end{corollary}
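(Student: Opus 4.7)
The plan is to combine Corollary~\ref{cor:cc} with a DFS-based argument in the style of Lemma~\ref{lemma:long-path}, but carefully launched from a chosen starting vertex rather than an arbitrary one. First, I apply Corollary~\ref{cor:cc} to $S_3$, which supplies a vertex $v\in S_3$ lying in a connected component $C$ of size $|C|\geq |V|-k+1$. The task then reduces to exhibiting a path of length $|V|-2k$ that starts at this particular $v$.

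Next, I run DFS on $G$ starting from $v$ (only vertices of $C$ are reachable, of course) and track the usual DFS three-way partition $V=S\sqcup T\sqcup U$: processed nodes $S$, stack nodes $T$ (which always form a path rooted at $v$), and untouched nodes $U$. The critical invariant is that no edge connects $S$ and $U$: within $C$ this is the standard DFS property that every neighbor of a popped vertex has already been discovered, and between $S\subseteq C$ and $V\setminus C\subseteq U$ no edges exist because $C$ is a connected component of $G$. Since $|C|\geq |V|-k+1\geq 2k+1$ using the assumption $|V|\geq 3k$, DFS eventually moves more than $k$ vertices into $S$, so there is a first moment at which $|S|=k$.

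At that moment, the hypothesis that every pair of disjoint vertex sets of size at least $k$ shares an edge forces $|U|<k$, since otherwise $S$ and $U$ would form disjoint sets of size $\geq k$ with no edge between them. Hence $|T|=|V|-|S|-|U|>|V|-2k$, and $T$ is a path rooted at $v$, delivering the desired path of length at least $|V|-2k$ starting at $v$. The one subtlety I expect to be the main obstacle is that when DFS is launched inside a proper subgraph, the untouched set $U$ inherits all of $V\setminus C$ on top of the still-unseen $C$-vertices, so one might worry this inflation of $|U|$ weakens the Krivelevich-style bound. The resolution is that this inflation is precisely what Corollary~\ref{cor:cc} renders harmless: because $C$ is a connected component, the forbidden $S$-to-$U$ edges simply cannot exist across the component boundary either, so the hypothesis still yields $|U|<k$ in spite of the external vertices piggy-backing in $U$.
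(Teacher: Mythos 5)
Your proposal is correct and takes essentially the same route as the paper: Corollary~\ref{cor:cc} to place some $v\in S_3$ in a component of size at least $|V|-k+1$, then the DFS partition into backtracked, stack, and unvisited vertices with the no-edge invariant to force the stack (a path from $v$) to have at least $|V|-2k$ edges. The only difference is bookkeeping — you snapshot the first moment the processed set reaches size $k$, while the paper bounds the backtracked set while the unvisited set still has size at least $k$ — and your explicit handling of the vertices outside the component sitting in $U$ is a nice clarification of a point the paper leaves implicit.
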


\begin{proof}
By Corollary~\ref{cor:cc} there is a node $v\in S_3$ which belongs
to a connected component of size at least $|V|-k+1$. Consider the
DFS from node $v$. At any time while the number of nodes not visited
is at least $k$, the number of nodes from which the DFS backtracked
is at most $k-1$. Otherwise the set of nodes backtracked and the set
of nodes not visited are both at least size $k$ and they are
disjoint, which is a contradiction to the hypothesis in the
corollary.
\end{proof}

\section{Random walks}
\label{app:random-walk}

In this section we outline the proof of
{Theorem~\ref{thm:rand-walk}}.

We define a new random walk $Y_t$ that stochastically dominates
$Q_t$. Similar to $Q_t$ we have $Y_1=0$, and $Y_{t+1}=Y_t+1$ or
$Y_{t+1}=\max\{Y_t-Z_t,M\}$. However, if $Y_t\geq M+1$ then $Z_t =
K$ with probability $\rho$ and otherwise $Z_t=0$. As before, $\rho K
=1+\beta $ where $\beta >0$.

We consider the steady state distribution of the random walk $Y_t$,
where $s_\ell$ is the probability that $Y_t=\ell$ for $\ell\geq M$.
The steady state probability need to satisfy the following
identities.
\begin{align}
\forall \ell\geq M+1\;\;\; s_{\ell+1} & = (1-\rho)s_\ell +\rho
s_{\ell+K+1}\\
s_M &= \rho \sum_{i=1}^K s_{M+i}\\
\forall \ell\;\;\; s_\ell &\geq 0\\
\sum_{\ell} s_\ell & = 1
\end{align}
The first identity implies that to reach a value of $Y_{t+1}=\ell+1$
either $Y_t=\ell$ and $Y_{t+1}=Y_t+1$ or $Y_t=\ell+K+1$ and $Z_t=K$.
The second identity states that to reach $Y_{t+1}=M$ then $Y_t\in
[M+1,M+K]$ and $Z_t=K$. The last two identities simply state that
$s_\ell$ is a distribution.

We will show that there is a solution to the identities such that
for $\ell\geq M+1$ we have $s_\ell=c\alpha^\ell$ for a constant
$c>0$. In such a case the first identity becomes
\[
\forall \ell\geq M+1\;\;\; c\alpha^{\ell+1} = (1-\rho)c\alpha^\ell
+\rho c\alpha^{\ell+K+1}
\]
Simply dividing across by $c\alpha^\ell$ we have
\[
\alpha = (1-\rho) +\rho \alpha^{k+1}
\]
We rename $k+1=k'$ and re-parameterize $\alpha$ using $x>0$ as
\[
\alpha= 1-\frac{x}{k'}
\]
This implies
\[
1-\frac{x}{k'} = 1-\rho +\rho (1-\frac{x}{k'})^{k'}\approx 1-\rho
+\rho e^{-x}
\]
Re-organizing
\[
\rho k' = x + \rho k' e^{-x}
\]
Recall that $\rho k = 1+\beta$, this implies that $\rho k'=
1+(\beta+\rho)$. Let $\beta'=\beta+\rho$. We have
\[
0=e^{-x}-1+\frac{x}{1+\beta'} \triangleq f(x)
\]
Note that $f(0)=0$, but $x=0$ implies $s_\ell=1$ and clearly
violates the fact that it should sum to $1$ (be a distribution).
Also note that $f(1+\beta')>0$ and $f(\epsilon)<0$ for small enough
$\epsilon>0$, so there is another root in $(0,1+\beta]$.

Using the Taylor series expansion we have that for $x\in(0,1)$,
\[
1-x+\frac{x^2}{2}-\frac{x^3}{6} < e^{-x} < 1-x+\frac{x^2}{2}
\]

This implies that
\[
\frac{x}{1+\beta'}-x+\frac{x^2}{2}-\frac{x^3}{6} < f(x) <
\frac{x}{1+\beta'} -x+\frac{x^2}{2}
\]
equivalently,
\[
\frac{-\beta'}{1+\beta'}x+\frac{x^2}{2}-\frac{x^3}{6} < f(x) <
\frac{-\beta'}{1+\beta'}x+\frac{x^2}{2}
\]
For $x=\frac{2\beta'}{1+\beta'}$ we have that the LHS (the upper
bound) is zero. Therefore,
\[
f(\frac{2\beta'}{1+\beta'})<0
\]
For $x=\frac{4\beta'}{1+\beta'}$
is
\[
\frac{-4\beta'^2}{(1+\beta')^2}+\frac{8\beta'^2}{(1+\beta')^2}-\frac{64\beta'^3}{6(1+\beta')^3}
=
\frac{4\beta'^2}{(1+\beta')^2}\left(1-\frac{8\beta'}{3(1+\beta')}\right)>0
\]
where the inequality follows since $\beta'<3/5$. This implies that
\[
f(\frac{4\beta'}{1+\beta'})>0
\]
Therefore, for some $x= \frac{\gamma\beta'}{1+\beta'}$, we have
$f(x)=0$, where $\gamma\in[2,4]$.

We can now consider the second identity and have
\[
s_M = \rho \sum_{i=1}^K s_{M+i} = c\rho \sum_{i=1}^K \alpha^i =
c\rho \alpha \frac{1-\alpha^{k+1}}{1-\alpha} = c\rho (1-\frac{x}{k})
\frac{k}{x}(1-e^{-x}(1-\frac{x}{k}))< c\frac{\rho k }{x}
\]

Clearly we have $s_\ell>0$. We now need to set $c>0$ such that they
sum to $1$.
\[
s_M+\sum_{i=1}^\infty s_{M+i}= s_M + c \sum_{i=1}^\infty \alpha^i
=s_M + c \frac{1}{1-\alpha}= s_M + c \frac{k}{x} <
c\frac{k(1+\rho)}{x}
\]
This implies that $c \in[\frac{x}{k(1+\rho)}, \frac{x}{k}]$. Since
$x\in[\frac{2\beta'}{1+\beta'},\frac{4\beta'}{1+\beta'}]$, we have
that $c \in[\frac{2\beta'}{k(1+\beta')(1+\rho)},
\frac{4\beta'}{(1+\beta')k}]$

\begin{claim}
the expected value of $Y_t$ is at most $M+
\frac{k}{x}<M+\frac{k(1+\beta')}{2\beta'} $.
\end{claim}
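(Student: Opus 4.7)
The plan is to read $E[Y_t]$ directly off the stationary distribution $(s_\ell)_{\ell \geq M}$ that has already been constructed in this section, and then substitute the bounds on $c$, $\alpha$, and $x$ established just above the claim.

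Concretely, the first computation I would perform is
\[
E[Y_t] - M \;=\; \sum_{i=1}^\infty i \cdot s_{M+i} \;=\; c\sum_{i=1}^\infty i\alpha^i \;=\; \frac{c\alpha}{(1-\alpha)^2},
\]
using the geometric form $s_{M+i} = c\alpha^i$ for $i \geq 1$ together with the standard identity $\sum_{i\geq 1} i\alpha^i = \alpha/(1-\alpha)^2$. I would then substitute $1-\alpha = x/k$ and use $\alpha < 1$ to drop the factor in the numerator, yielding $E[Y_t] - M \leq c(k/x)^2$. Plugging in the upper bound $c \leq x/k$ from the normalization argument reduces this to $E[Y_t] - M \leq k/x$, which is the first inequality in the claim. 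The second inequality then follows from the root-bracketing $x \geq 2\beta'/(1+\beta')$ proved earlier in the appendix, which rearranges to $k/x \leq k(1+\beta')/(2\beta')$.

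The one nonroutine point is that the calculation above is carried out in the stationary distribution, whereas the claim is stated for $Y_t$ at an arbitrary time starting from $Y_1 = 0$. I would close this gap with a monotone coupling: using a single uniform draw at each step to decide ``up'' vs.\ ``down'' (with the down event triggering $Z_t = K$ only when the current state is at least $M+1$), the transition $y \mapsto y+1$ or $y \mapsto \max(y - Z_t, M)$ preserves the ordering between two copies of the walk started at different states --- the only case requiring a check is when one copy sits at $M$ and the other at some $M + i$ with $i \geq 1$ and a down step fires, in which case the lower copy stays at $M$ while the upper copy drops to $\max(M + i - K, M) \geq M$. Hence $Y_t$ started from $0$ is stochastically dominated by $Y_t$ started from the stationary distribution, so $E[Y_t] \leq E[Y_\infty]$ for every $t$. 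This coupling is the main (though standard) obstacle; the remainder is direct algebra on the geometric series and substitution of bounds already in hand.
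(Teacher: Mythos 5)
Your core calculation is essentially the paper's own proof: both read $E[Y_t]$ off the stationary distribution, sum the geometric series $\sum_i i\,s_{M+i}$, substitute $1-\alpha = x/k$ and the normalization bound $c \le x/k$, and finish with the root-bracketing $x \ge 2\beta'/(1+\beta')$ (your $\sum_{i\ge1} i\alpha^i = \alpha/(1-\alpha)^2$ is in fact the cleaner identity; the paper's intermediate expression $c\alpha^2/(1-\alpha)^2$ has a harmless slip, and the final bound is the same). The one place you go beyond the paper is the coupling paragraph relating the walk started at $Y_1=0$ to the stationary law --- the paper simply computes under the steady state and leaves this implicit. Be aware, though, that your monotonicity check assumes that when the lower copy sits at $M$ and a down step fires it ``stays at $M$''; that reading of the dynamics is inconsistent with the paper's balance equation $s_M = \rho\sum_{i=1}^K s_{M+i}$, which forces the chain to move up from $M$ with probability $1$ (a hold at $M$ would add a $\rho s_M$ term). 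Under that dynamics, if the lower copy is at $M$ and the upper copy is at $M+i$ with $1\le i\le K$, a down step sends the lower copy to $M+1$ and the upper copy to $M$, so the synchronized coupling can transiently reverse the order and the induction as stated breaks; it needs a small patch (e.g., tracking that the discrepancy at the boundary is at most $1$, or coupling only above $M+1$). Since the claim is intended as a statement about the steady-state expectation, this does not affect the correctness of the main computation, but the extra step as written is not yet airtight.
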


\begin{proof}
The claim follows by considering the steady state distribution:
\begin{align*}
E[Y_t]&=M s_M +\sum_{i=1}^\infty s_{M+i}(M+i)\\
& = M + \sum_{i=1}^\infty i s_{M+i}\\
& = M + c\sum_{i=1}^\infty i \alpha^i \\
& = M + \frac{c\alpha}{1-\alpha}\sum_{i=0}^\infty i \alpha^i(1-\alpha) \\
& = M+\frac{c\alpha^2}{(1-\alpha)^2} = M+
c\frac{k^2}{x^2}(1-\frac{x}{k})^2 \\
&< M+c\frac{k^2}{x^2} < M+\frac{k}{x}\\
&\leq M+k\frac{1+\beta'}{2\beta'}
\end{align*}
\end{proof}
This implies that $E[Y_t]<M+O(k)$, and since $Y_t$ dominates $Q_t$
we have that $E[Q_t]  < M+O(K)$. For the high probability we have
the following.

\begin{claim}
With probability $1-\delta$ we have $Q_t\leq M+A$, where
$A=\frac{k(1+\beta')}{2\beta'}\ln \frac{c k}{\delta \lambda}$
\end{claim}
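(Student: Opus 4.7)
The plan is to bootstrap directly off the steady-state analysis just carried out for the dominating walk $Y_t$. Since $Y_t$ stochastically dominates $Q_t$, it suffices to prove the tail bound $\Pr[Y_t \geq M+A] \leq \delta$, and the geometric shape of the steady-state mass $s_{M+i} = c\alpha^i$ for $i \geq 1$ (established in the previous paragraphs, with $\alpha = 1-x/k$) makes this a one-line summation.

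Concretely, I would compute the tail of $Y_t$ in steady state as a geometric sum:
\[
\Pr[Y_t \geq M+A] \;=\; \sum_{i \geq A} s_{M+i} \;=\; c \sum_{i \geq A} \alpha^i \;=\; \frac{c\,\alpha^A}{1-\alpha} \;=\; \frac{ck}{x}\,\alpha^A,
\]
using $1-\alpha = x/k$. Then I would apply the elementary bound $\alpha^A = (1-x/k)^A \leq e^{-Ax/k}$, so that
\[
\Pr[Y_t \geq M+A] \;\leq\; \frac{ck}{x} e^{-Ax/k}.
\]
Setting this $\leq \delta$ and solving for $A$ gives the requirement $A \geq (k/x)\ln(ck/(x\delta))$. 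Substituting the earlier lower bound $x \geq 2\beta'/(1+\beta')$ (with $\lambda$ denoting this lower bound on $x$, so $k/x \leq k(1+\beta')/(2\beta')$) yields exactly $A = \frac{k(1+\beta')}{2\beta'}\ln\frac{ck}{\delta\lambda}$, as claimed.

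The main subtle point I would address carefully is why we may invoke the steady-state distribution to bound $\Pr[Q_t \geq M+A]$ at an arbitrary finite time $t$. Two remarks suffice: first, $Y_t$ started at $Y_1 = 0 \leq M$ is coupled to start below the stationary measure, and the transition rules only push mass upward by $+1$ or down by a random amount $Z_t \in \{0,K\}$ reflected at $M$, so the distribution of $Y_t$ is itself stochastically dominated by the steady-state distribution $\{s_\ell\}$ for every $t$; second, the same stochastic-domination argument from the expectation bound carries $Q_t \preceq Y_t$. Combining, $\Pr[Q_t \geq M+A] \leq \Pr[Y_t \geq M+A] \leq \delta$, completing the proof.

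The only calculation I expect to be even mildly delicate is tracking which constant plays the role of $\lambda$ in the final expression — i.e., verifying that the lower bound $x \in [2\beta'/(1+\beta'),\,4\beta'/(1+\beta')]$ derived earlier is what gets absorbed into the logarithmic factor. Everything else is a routine geometric-series tail estimate layered on top of the stochastic-domination relation $Q_t \preceq Y_t$ already in play.
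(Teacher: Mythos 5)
Your proof is correct and takes essentially the same route as the paper's: sum the geometric tail $\sum_{i\ge A} s_{M+i} = c\alpha^A/(1-\alpha)$ of the dominating walk's steady-state distribution and choose $A$ so that this is at most $\delta$, using $x\ge 2\beta'/(1+\beta')$. Your write-up is in fact slightly more careful than the paper's, since you identify the otherwise-undefined $\lambda$ as that lower bound on $x$ and justify (via monotone coupling from $Y_1=0$) why the stationary tail bounds $\Pr[Q_t\ge M+A]$ at finite $t$, a point the paper leaves implicit.
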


\begin{proof}
The probability of states with more than $A$ are
\[
\sum_{i=A}^\infty s_{M+i} = \sum_{i=A}^\infty c\alpha^i =
\frac{c\alpha^A}{1-\alpha}
\]
Recall that $1-\alpha=1-\frac{x}{k}\geq
1-\frac{2\beta'}{k(1+\beta')}$. Also, $\alpha^A \leq
(1-\frac{2\beta'}{k(1+\beta')}^A$. Using the value of $A$ we have
that $\alpha^A \leq \frac{ck(1+\beta')}{2\beta'\delta}$, and
therefore, the probability is bounded by $\delta$.
\end{proof}

%
%
%
%

\end{document}